\theoremstyle{plain}	
\theoremstyle{plain}	
\theoremstyle{plain} 	
\theoremstyle{plain} 	
\theoremstyle{plain} 	\newtheorem{The}{Theorem}
\theoremstyle{plain} 	
\theoremstyle{plain} 	
\theoremstyle{plain} 	
\theoremstyle{plain} 	
\theoremstyle{plain}	
\theoremstyle{plain}	\newtheorem{Def}{Definition}
\theoremstyle{plain}
\newcommand{\eM}     {\mbox{$\epsilon$-machine}}
\newcommand{\eMs}    {\mbox{$\epsilon$-machines}}
\newcommand{\EM}     {\mbox{$\epsilon$-Machine}}
\newcommand{\EMs}    {\mbox{$\epsilon$-Machines}}
\newcommand{\order}[1]{order\protect\nobreakdash-$#1$}
\newcommand{\cryptic}[1]{$#1$\protect\nobreakdash-cryptic}
\newcommand{\Process}{\mathcal{P}}
\newcommand{\MeasAlphabet}	{\mathcal{A}}
\newcommand{\MeasSymbol}   { {X} }
\newcommand{\meassymbol}   { {x} }
\newcommand{\Past}	{ \smash{\overleftarrow {\MeasSymbol}} }
\newcommand{\past}	{ \smash{\overleftarrow {\meassymbol}} }
\newcommand{\Future}	{ \smash{\overrightarrow{\MeasSymbol}} }
\newcommand{\PastL}	{ \smash{\overleftarrow {\MeasSymbol}{}^L} }
\newcommand{\FutureL}	  { \smash{\overrightarrow{\MeasSymbol}{}^L} }
\newcommand{\AllPasts}	    { \smash{\overleftarrow{ {\rm {\bf \MeasSymbol}} } } }
\newcommand{\CausalState}	{ \mathcal{S} }
\newcommand{\CausalStateSet}	{ \boldsymbol{\CausalState} }
\newcommand{\AlternateState}	{ \mathcal{R} }
\newcommand{\AlternateStatePrime}	{ {\cal R}^{\prime} }
\newcommand{\alternatestate}	{ \rho }
\newcommand{\AlternateStateSet}	{ \boldsymbol{\AlternateState} }
\newcommand{\PrescientState}	{ \widehat{\AlternateState} }
\newcommand{\PrescientStateSet}	{ \boldsymbol{\PrescientState}}
\newcommand{\Prob}      {\Pr} 
\newcommand{\Cmu}		{C_\mu}
\newcommand{\hmu}		{h_\mu}
\newcommand{\EE}		{{\bf E}}
\newcommand{\TI}		{{\bf T}}
\newcommand{\SI}        {{\bf S}}
\newcommand{\PC}		{\chi}
\newcommand{\forward}{+}
\newcommand{\reverse}{-}
\newcommand{\forwardreverse}{\pm} 
\newcommand{\FutureCausalState}	{ {\CausalState}^{\forward} }
\newcommand{\PastCausalState}	{ {\CausalState}^{\reverse} }
\newcommand{\lastindex}[2]{%
  \edef\tempa{0}
  \edef\tempb{#2}
  \ifx\tempa\tempb
    \edef\tempc{#1}
  \else
    \edef\tempa{0}
    \edef\tempb{#1}
    \ifx\tempa\tempb
      \edef\tempc{#2}
    \else
      \edef\tempc{#1+#2}
    \fi
  \fi
  \tempc
}
\newcommand{\BE}[2][0]{\ensuremath{H[\MeasSymbol_{#1}^{#2}]}}
\newcommand{\BSE}[2][0]{
  \ensuremath{H[\MeasSymbol_{#1}^{#2},\CausalState_{\lastindex{#1}{#2}}]}%
}
\newcommand{\SBE}[2][0]{\ensuremath{H[\CausalState_{#1},\MeasSymbol_{#1}^{#2}]}}
\newcommand{\AltSBE}[2][0]{\ensuremath{H[\AlternateState_{#1},\MeasSymbol_{#1}^{#2}]}}
\newcommand{\AltBSE}[2][0]{\ensuremath{H[\MeasSymbol_{#1}^{#2},\AlternateState_{\lastindex{#1}{#2}}]}}
\newcommand{\GI}{\varphi}
\newcommand{\OI}{\zeta}
\newcommand{\COrder}{k_{\PC}}
\newcommand{\GOrder}{k_{\GI}}
\newcommand{\OOrder}{k_{\OI}}
\newcommand{\MOrder}{R}
\newcommand{\SOrder}{k_{\SI}}
\newcommand{\CSjoint}[1][,]{
   \edef\tempa{:}
   \edef\tempb{#1}
   \ifx\tempa\tempb
      \ensuremath{\FutureCausalState\!#1\PastCausalState}
   \else
      \ensuremath{\FutureCausalState#1\PastCausalState}
   \fi
}
\newif\ifpm
\edef\tempa{\forwardreverse}
\edef\tempb{\pm}
\newif\ifcolor
\else\selectcolormodel{gray}\fi
\begin{document}

\title{Synchronization and Control in
Intrinsic and Designed Computation:\\
An Information-Theoretic Analysis of\\
Competing Models of Stochastic Computation}

\author{James P. Crutchfield}
\email{chaos@ucdavis.edu}
\affiliation{Complexity Sciences Center and Physics Department,
University of California at Davis, One Shields Avenue, Davis, CA 95616}
\affiliation{Santa Fe Institute, 1399 Hyde Park Road, Santa Fe, NM 87501}

\author{Christopher J. Ellison}
\email{cellison@cse.ucdavis.edu}
\affiliation{Complexity Sciences Center and Physics Department,
University of California at Davis, One Shields Avenue, Davis, CA 95616}

\author{Ryan G. James}
\email{rgjames@ucdavis.edu}
\affiliation{Complexity Sciences Center and Physics Department,
University of California at Davis, One Shields Avenue, Davis, CA 95616}

\author{John R. Mahoney}
\email{jrmahoney@ucdavis.edu}
\affiliation{Complexity Sciences Center and Physics Department,
University of California at Davis, One Shields Avenue, Davis, CA 95616}

\date{\today}

\bibliographystyle{unsrt}

\begin{abstract}
We adapt tools from information theory to analyze how an observer comes to
synchronize with the hidden states of a finitary, stationary stochastic process.
We show that synchronization is determined by both the process's internal
organization and by an observer's model of it. We analyze these components
using the convergence of state-block and block-state entropies, comparing
them to the previously known convergence properties of the Shannon block
entropy. Along the way, we introduce a hierarchy of information quantifiers as
derivatives and integrals of these entropies, which parallels a similar
hierarchy introduced for block entropy. We also draw out the duality between
synchronization properties and a process's controllability. The tools lead
to a new classification of a process's alternative representations in terms
of minimality, synchronizability, and unifilarity.

\vspace{0.1in}
\noindent
{\bf Keywords}: controllability, synchronization information, stored
information, entropy rate, statistical complexity, excess entropy, crypticity,
information diagram, presentation, minimality, gauge information, oracular
information

\end{abstract}

\pacs{
02.50.-r  
89.70.+c  
05.45.Tp  
02.50.Ey  
02.50.Ga  
}
\preprint{Santa Fe Institute Working Paper 10-07-XXX}
\preprint{arxiv.org:1007.XXXX [physics.gen-ph]}

\maketitle
\

\tableofcontents

\setstretch{1.1}

\vspace{0.3in}

{\bf Nonlinear dynamical systems store and generate information---they
\emph{intrinsically compute}.
Real computing devices use nonlinearity to do the same, except that they are
\emph{designed to compute}---the information serves some utility or function
determined by the designer. Intuitively, useful computing devices must be
constructed out of (physical, chemical, or biological) processes that have some
minimum amount of intrinsic computational capability. However, the exact
relationship between intrinsic and designed computation remains elusive. In
fact, bridging intrinsic and designed computation requires solving a number of
intermediate problems. One is to understand the diversity of intrinsic
computation of which nonlinear dynamical systems are capable. Another is to
determine if one can practically manipulate these systems in the service of
functional information generation and storage.

Here, we address both of these problems from the perspective of information
theory. We describe new information processing characteristics of dynamical
systems and the stochastic processes they generate. We focus particularly on
two key aspects that impact design: synchronization and control.
Synchronization concerns how we come to know the hidden states of a process
through observations; while control, how we can manipulate a process into a
desired internal condition.
} 

\section{Introduction}

Given a model of a stationary stochastic process, how much information must
one extract from observations to exactly know which state the process is in?
With this, an observer is said to be synchronized to the process.
(For an introduction to the problem, see Ref.~\cite{Crut01b}.)

Given that one has designed a stochastic process, is there a series of inputs
that reliably drive it to a desired internal condition? If so, the designed
process is said to be controllable.

Synchronization and control are dual to each other: In synchronization, an
observer attempts to predict the process's internal state from incomplete and
indirect observations, typically starting with complete ignorance and
hopefully ending with complete certainty. In control, one must extract from
the design a series of manipulations, typically indirect, that will drive the
process to a desired state or set of states. The duality is simply that the
observer's measurements can be interpreted as the designer's control inputs.

Synchronization and control are key aspects in intrinsic and designed
computation, both to detecting intrinsic computation in dynamical systems
and to leveraging a dynamical system's intrinsic computation into useful
computation. For the latter, the circuit designer attempts to build circuits,
themselves dynamical systems, that synchronize to incoming signals.


For example,
even the most mundane initial operation is essential: When power is first
applied, a digital computer must predictably reach a stable and repeatable
state, without necessarily being able to perform even small amounts of digital
intelligent control or analysis of its changing environment. Without reliably
reaching a stable condition---now a quite elaborate operation in modern
microprocessors---no useful information processing can be initiated. The
device is still a dynamical system, of course, but it fails at raising itself
from that prosaic condition to the level of a computing device.

Once digital computing operations have commenced, similar concerns arise in the
timing and control of information being loaded from memory into a register.
Not only must each data bus line synchronize properly or risk misconstruing
the voltage level offered up by the wires, but this must happen simultaneously
across a number of component devices---quite wide buses, 128 and 256 lines
are not uncommon today.

Stepping back a bit, one must wonder what tools dynamical systems theory itself
provides to analyze and design computation. Indeed, many of the properties
often used to characterize and classify dynamical systems are
time-asymptotic---the Kolmogorov-Sinai entropy or Shannon entropy rate,
the spectrum of Lyapunov characteristic exponents, the fractal and information
dimensions (which rely on the asymptotic invariant measure), come to mind.
However, real computing is not asymptotic. Individual logic gates, as dynamical
systems, deliver their results on the short term. Indeed, the faster they do
this, the better.

How can we bridge the gap between dynamical systems theory and
the need to characterize the short term properties of dynamical systems?
A suggestive example is found in the analysis of escape rates~\cite{Wigg92a},
a property of transient, short-term behavior. Another answer is found in
synchronization and controllability, as they too are properties of the
short term behavior of dynamical systems. We will show that there is a
connection between these properties and the more typical asymptotic view of dynamical
behavior: Synchronization and control are determined by the nature of
convergence to the asymptotic---they are our subject.

Given the duality between synchronization and control, in the following we
present results in terms of only one notion---synchronization. The results
apply equally well to control, though with different interpretations.

\subsection{Precis}

Analyzing informational convergence properties is the main strategy we will use.
However, as we will see, different properties converge differently
from each other, either for a given process or as one looks across a family of
processes. Moreover, for a given process we will consider a family of
different representations of it. The result, while giving insight into
informational properties and how representations can distort them,
ends up being a rather elaborate classification scheme. To reduce the
apparent complication, it will be helpful to give a detailed summary of
the steps we employ in the development.

After describing related work, we review the use of Shannon block entropy and
related quantities, analyzing their asymptotic behavior and aspects of convergence.
We introduce a single framework---the convergence hierarchy---to call out the systematic nature of
convergence properties.

We then take a short detour to introduce the range of possible descriptions a
process can have, noting their defining properties. One, the \eM, plays a
particularly central role, as it allows one to calculate all of a process's
intrinsic properties. Other descriptions typically do not allow this to the
same broad extent.

With a model in hand, one can start to discuss how one synchronizes to its
states. When the model is the \eM, one can speak of synchronizing to
the process itself. To do this, we analyze the convergence properties of two
new entropies: the state-block entropy and the block-state entropy. We
establish their general asymptotic properties, introducing convergence
hierarchies of their own, paralleling that for the block entropy. For finitary
processes, the latter converges from below, but the new block-state entropy
converges from above to the same asymptote. One benefit is that estimation
methods can be improved through use of bounds from above and below.

When we specialize to the \eM, we establish a direct connection between
synchronization and how the block entropies converge. We provide an
informational measure---synchronization information---that summarizes the total
uncertainty encountered
during synchronization. We relate this back to the transient information
introduced previously, which derives only from the observed sequences,
not requiring a model or a notion of state. Along the way, we discuss a
process's Markov
order---the scale at which ``asymptotic'' statistics set in---and its cryptic
order---the length scale over which internal state information is spread.
These scales control synchronization.

The development then, step-by-step, relaxes the \eM's defining properties
in order to explore an increasingly wide range of models. A particular
emphasis in this is to show how nonoptimal models bias estimates of a
process's informational properties. Conversely, we learn how certain classes
of models, some widely used in mathematical statistics and elsewhere, make
strong assumptions and, in some cases, preclude the estimation of important
process properties.

Starting with the class of minimal, optimally predictive models that synchronize
(finitary \eMs), we first relax the minimality assumption. We show that needless
model elaborations---such as more, but redundant states---can affect
synchronization. We identify that class which still does
synchronize. Then, we consider nonminimal unifilar,
nonsynchronizing models. Finally, we relax the unifilarity assumption.
At each stage, we see how the convergence properties of the various entropies
change. These changes, in turn, induce a number of informational measures
of what the models themselves contribute to a process's now largely-apparent
information processing.

A key tool in the analysis takes advantage of the fact that the various
multivariable information quantities form a signed measure~\cite{Yeun91a}.
Their visual display, a form of Venn diagram called an information diagram,
brings some order to the notation and classification chaos.

\subsection{Synchronization and Control: Related Work}

Controlling dynamical systems and stochastic processes has an extensive history.
For linear dynamical systems see, for example, Ref.~\cite{Klam91a} and for
hidden Markov models see, for example, Ref.~\cite{Elli94a}. More recently,
there has been much work on controlling nonlinear dynamical systems, a markedly
more difficult problem in its full generality; see
Refs.~\cite{Andr03a,Andr04a,Gonz04a}.

Synchronization, too, has been very broadly studied and for much longer, going
back at least to Huygens~\cite{Piko01a}. It is also an important property of
symbolic dynamical systems~\cite{Jono96a}. It has even become quite popularized
of late, being elevated to a general principle of natural
organization~\cite{Stro03a}.

Here, we consider a form of synchronization that is, at least at this point,
distinct from the dynamical kind.
Moreover, we take a complementary, but distinct approach---that of information
theory---to address control and synchronization. This was introduced in
Ref.~\cite{Crut01a} and several applications are given in
Refs.~\cite{Crut01b,Feld02a}. A roughly similar problem setting for
synchronization is found in Ref.~\cite{Mark73a}. We note that the closely
related topics of state estimation and control are addressed in information
theory~\cite{Feng97a,Ahme98a}, nonlinear dynamics~\cite{Pack80,Take81,Ott04a},
and Markov decision processes~\cite{Pute05a}.

Adapting the present approach to continuous dynamical systems and stochastic
processes remains a future effort. For the present, the closest connections will
be found to the work cited above on hidden Markov models and symbolic dynamical
systems.

\vspace{-0.1in} 
\section{Block Entropy and Its Convergence Hierarchy}

It is an interesting fact, perhaps now intuitive, that to estimate even the
randomness of an information source, one must also estimate it's internal
structure. Ref.~\cite{Crut01a} gives a review of this interdependence and
it serves as a starting point for our analysis of synchronization, which is
a question about coming to know the source's states from observations.
Indeed, if one has to make estimates of internal organization just to get to
randomness, then one, in effect and without too much more effort, can also
address issues of synchronization. There is an intimate relationship that
we hope to establish.

We briefly review Ref.~\cite{Crut01a}, largely to introduce notation and
highlight the main ideas needed for synchronization. This review and our
development of synchronization requires the reader to be facile with information
theory at the level of the first half of Ref.~\cite{Cove06a}, signed
information measures and information diagrams of Ref.~\cite{Yeun91a},
and their uses in Refs.~\cite{Crut08a,Crut08b,Maho09a}.

\vspace{-0.1in} 
\subsection{Stationary Stochastic Processes}

The approach in Ref.~\cite{Crut01a} starts simply: Any stationary process, $\Process$, is
a joint probability distribution $\Pr(\Past,\Future)$ over past and future
observations. This distribution can be thought of as a
\emph{communication channel} with a specified input distribution, $\Pr(\Past)$.
It transmits information from the \emph{past}
$\Past = \ldots \MeasSymbol_{-3} \MeasSymbol_{-2} \MeasSymbol_{-1}$ to the
\emph{future} $\Future = \MeasSymbol_0 \MeasSymbol_1 \MeasSymbol_2 \ldots$
by storing it in the present. $\MeasSymbol_t$ is the random variable for
the measurement outcome at time $t$; the lowercase $\meassymbol_t$ denotes a
particular value. Throughout this work, we always use $\Past$ and $\Future$ in
the limiting sense. That is, we work with length-$L$ sequences or \emph{blocks}
of random variables:
$\MeasSymbol_t^L = \MeasSymbol_t\MeasSymbol_{t+1} \cdots \MeasSymbol_{t+L-1}$
and take the limit as $L$ approaches infinity.

In the following, we consider only discrete measurement
outcomes---$\meassymbol \in \MeasAlphabet = \{1,2,\ldots,k\}$---and stationary
processes---$\Prob(\MeasSymbol_t^L) = \Prob(\MeasSymbol_0^L)$, for all times
$t$ and block lengths $L$. Unlike some definitions of stationarity, this makes
no assumptions about the process's internal starting conditions, as such
knowledge obviates the very question of synchronization.

Such processes include those found in the field of stochastic processes, of
course, but one also has in mind the symbolic dynamics of continuous-state
continuous-time or continuous-state discrete-time dynamical systems on their
invariant sets. The notions also apply equally well to one-dimensional spatial
configurations of spin systems and of deterministic and probabilistic cellular
automata, where one interprets the spatial coordinate as a ``time''.

\vspace{-0.1in}
\subsection{Block Entropy}

One measure of the diversity of length-$L$ sequences generated by a process is
its \emph{Shannon block entropy}:
\begin{align}
H(L) & \equiv H[\MeasSymbol_0^L] \\
     & = - \sum_{w \in \MeasAlphabet^L} \Prob(w) \log_2 \Prob(w) ~,
\label{eq:BlockEntropy}
\end{align}
where $w = \meassymbol_0 \meassymbol_1 \ldots \meassymbol_{L-1}$ is a
\emph{word} in the set $\MeasAlphabet^L$ of length-$L$ sequences.
It has units of [bits] of information.
One can think of the block entropy as a kind of transform that reduces a
process's distribution over the (typically infinite) number of sequences to
a function of a single variable $L$. In this view, Ref.~\cite{Crut01a} focused
on a simple question: What properties of a process can be determined solely
from its $H(L)$?

\vspace{-0.1in} 
\subsection{Source Entropy Rate}

One of those properties, and historically the most widely used and
technologically important, is \emph{Shannon's source entropy rate}:
\begin{align}
\hmu = \lim_{L \rightarrow \infty} \frac{H(L)}{L} ~.
\label{eq:EntropyRate}
\end{align}
The entropy rate is the irreducible unpredictability of a process's
output---the intrinsic randomness left after one has extracted all of the
correlational information from past observations. The difference between it and
the alphabet size, $\log_2 |\MeasAlphabet| - \hmu$, indicates how much the raw
measurements can be compressed. More precisely, Shannon's First Theorem states
that the output sequences $\meassymbol^L$ from an information source can be
compressed, without error, to $L \hmu$ bits \cite{Cove06a}. Moreover, Shannon's
Second Theorem gives operational meaning to the entropy rate \cite{Cove06a}:
A communication channel's capacity must be larger than $\hmu$ for error-free
transmission.

\vspace{-0.1in} 
\subsection{Excess Entropy}

As noted, any process---chaotic dynamical system, spin chain, cellular automata,
to mention a few---can be considered a channel that communicates its past to
its future. The messages to be transmitted in this way are the pasts which the
process can generate. Thus, the ``capacity'' of this channel is not something
that one optimizes as done in Shannon's theory to engineer channels and
construct error-free encodings. Rather, we think of it as how much of the
process's channel is actually used.

A process's channel utilization is another property that can be determined from
the block entropy. It is called the \emph{excess entropy} and is defined,
closely following Shannon's channel capacity definition, by:
\begin{align}
\EE = I[\Past;\Future] ~,
\label{eq:ExcessEntropyMI}
\end{align}
where $I[Y;Z]$ is the mutual information between random variables $Y$ and $Z$.
It has units of [bits] and tells one how much information the output shares
with the input and so measures how much information is transmitted through a,
possibly noisy, channel.

\vspace{-0.1in} 
\subsection{Block Entropy Asymptotics}

It has been known for quite some time now that the entropy rate and excess
entropy control the asymptotic behavior ($L \rightarrow \infty$) of a
\emph{finitary} process's block entropy. Specifically, it scales according to
the linear asymptote:
\begin{align}
H(L) \propto \EE + \hmu L ~.
\label{eq:beasymptote}
\end{align}
Specifically,
\begin{align}
\EE = \lim_{L \rightarrow \infty} \left( H(L) - L\hmu \right) ~.
\label{eq:EEAsymptoteDefn}
\end{align}
$\EE$ is the sublinear part of $H(L)$.
This gives important general insight into the block entropy's behavior. It is
also quite practical, though: If $H(L)$ actually meets the asymptote at some
finite sequence length $R$, then the process is effectively an \order{R} Markov
chain \cite{Crut01a,Maho09a}: $\Prob(X_0 | \Past ) = \Prob(X_0|X_{-R}^R)$.
Interestingly, many finitary processes do not reach the asymptote at finite
lengths and so cannot be recast as Markov chains of any order. Roughly speaking,
they have various kinds of infinite-range correlation.


\vspace{-0.1in} 
\subsection{The Convergence Hierarchy}

In this way, the study of how the block entropy converges, or does not, is
a tool for
classifying processes. Reference~\cite{Crut01a} showed that the entropy rate and
excess entropy are merely two players in an infinite hierarchy that determines
the shape of $H(L)$. The central idea is to take $L$-derivatives
and integrals of $H(L)$.

To start, one has the block entropy difference:
\begin{align}
\Delta H[\MeasSymbol_0^L] \equiv H[\MeasSymbol_0^L] - H[\MeasSymbol_0^{L-1}] ~,
\end{align}
where $\Delta$ is the discrete derivative with respect to block length $L$.
It is easy to see that the right-hand side is the conditional entropy
$H[X_{L-1}|\MeasSymbol_0^{L-1}]$ and that, in turn,
\begin{align}
\hmu & = \lim_{L \rightarrow \infty} H[X_{L-1}|\MeasSymbol_0^{L-1}]  \\
     & = H[X_0 | \Past_0] ~,
\end{align}
recovering the entropy rate. It is often useful to directly refer
to the length-$L$ approximation to the entropy rate as
$\hmu(L) \equiv H[X_{L-1}|\MeasSymbol_0^{L-1}]$. $\hmu(L) \geq \hmu$ and
so it converges from above.

The excess entropy, for its part, controls the convergence speed, as it is the
discrete integral:
\begin{align}
\EE = \sum_{L = 1}^\infty (\hmu(L) - \hmu) ~.
\label{eq.ExcessEntropyPartialSum}
\end{align}
It requires only a few steps to see that this form is equivalent to that of
Eq.~(\ref{eq:ExcessEntropyMI}).

Following a similar strategy, the discrete integral
\begin{align}
\TI = \sum_{L = 0}^\infty \left[ \EE + \hmu L - H(L) \right]
\end{align}
measures how $H(L)$ itself reaches its linear asymptote $\EE + \hmu L$.
$\TI$ is called the \emph{transient information} and it is implicated in
determining the Markov order and, as we will show, synchronization.

The pattern should be clear now: At the lowest level, the transient information
indicates how quickly the block entropy reaches its asymptote. Then, that
asymptote grows at the rate $\hmu$ and has $y$-intercept $\EE$.
It might be helpful to refer to the graphical summary of block-entropy
convergence and the associated information measures given in Ref.
\cite[Fig. 2]{Crut01a}. Analogous diagrams will appear shortly.

All this can
be compactly summarized by introducing two operators: a derivative and an
integral that operate on $H(L)$. The derivative operator at the
$n^{\mathrm{th}}$-level is:
\begin{align}
\Delta^n H[\MeasSymbol_0^L]
    \equiv \Delta^{n-1} H[\MeasSymbol_0^L] - \Delta^{n-1}
	H[\MeasSymbol_0^{L-1}] ~,
\label{eq:DiffOperator}
\end{align}
for $L \geq n = 1, 2, \ldots$ and for $L \geq n=0$,
\begin{align}
\Delta^0 H[\MeasSymbol_0^L] \equiv H[\MeasSymbol_0^L] ~.
\end{align}
The integral operator is:
\begin{align}
  \mathcal{I}_n
    \equiv \sum_{L=n}^\infty \left[ \Delta^n H[\MeasSymbol_0^L] -
      \lim_{\ell\rightarrow \infty} \Delta^n H[\MeasSymbol_0^\ell] \right] ~,
\label{eq:IntegralOperator}
\end{align}
$n = 0, 1, 2, \ldots$.
(This is a slight deviation from Ref.~\cite{Crut01a}, when $n = 2$. See
App.~\ref{RURODiff}.)

To make the connection with what we just discussed, in this notation we have:
\begin{align}
  \hmu & = \lim_{L \rightarrow \infty} \Delta^1 H[\MeasSymbol_0^L] ~,\\
  \EE  & = \mathcal{I}_1 ~, ~\mathrm{and}\\
  \TI  & = - \mathcal{I}_0 ~.
\end{align}
Additionally, $\mathcal{I}_2$ is a process's \emph{total predictability}
$\mathbf{G}$
and $\Delta^2 H[\MeasSymbol_0^L]$ is its predictability gain---the rate at which
predictions improve by going to longer sequences.

The two operators, $\Delta_n$ and $\mathcal{I}_n$, define the
\emph{entropy convergence hierarchy} for a process, capturing those
properties reflected in the process's block entropy. Given a process's
specification, one attempts to calculate the hierarchy analytically; given
data, to estimate it empirically. In addition to systematizing a process's
informational properties, the hierarchy has a number of uses. For example,
structural classes of processes can be distinguished by the $n^*$ at which the
hierarchy becomes trivial; for example, when
$\Delta^n H[\MeasSymbol_0^L] = 0, ~n > n^*$. Other classifications turn on
bounded $\mathcal{I}_{n^*}$. The finitary processes, for example, are defined
by $n^* = 1$: $\mathcal{I}_1 = \EE < \infty$. Or, conversely, there are well
known processes for which some integrals diverge; they include the onset of
chaos through period-doubling, where the excess entropy diverges.
Reference \cite[Sec. VII.A]{Crut01a} introduces a classification of
processes along these lines.

\section{Process Presentations}

\subsection{The Causal State Representation}

Prediction is closely allied to the view of a process as a communication channel:
We wish to
predict the future using information from the past. At root, a prediction is
probabilistic, specified by a distribution of possible futures $\Future$ given
a particular past $\past$: $\Pr(\Future|\past)$. At a minimum, a good predictive
model needs to capture \emph{all} of the information $I$ shared between the
past and future: $\EE = I[\Past;\Future]$.

Consider now the goal of modeling---building a representation that allows not
only good prediction but also expresses the mechanisms producing a system's
behavior. To build a model of a structured process (a memoryful channel),
computational mechanics~\cite{Crut88a} introduced an equivalence relation
$\past \sim \past^\prime$ that clusters all histories which give rise to the
same prediction:
\begin{equation}
\epsilon(\past) =
  \{ \past^\prime: \Pr(\Future|\past) = \Pr(\Future|\past^\prime) \} ~.
\label{Eq:PredictiveEquivalence}
\end{equation}
In other words, for the purpose of forecasting the future, two different pasts
are equivalent if they result in the same prediction. The result of applying
this equivalence gives the process's \emph{causal states}
$\CausalStateSet = \Pr(\Past,\Future) / \sim$, which partition the space
$\AllPasts$ of pasts into sets that are predictively equivalent.
The set of causal states
\footnote{A process's causal states consist of both transient and recurrent
states. To simplify the presentation, we henceforth refer \emph{only} to
recurrent causal states.}
can be discrete, fractal, or continuous;
see, e.g., Figs. 7, 8, 10, and 17 in Ref.~\cite{Crut92c}.

State-to-state transitions are denoted by matrices
$T_{\CausalState \CausalState^\prime}^{(\meassymbol)}$ whose elements give the
probability $\Pr(\MeasSymbol=\meassymbol,\CausalState^\prime|\CausalState)$ of transitioning
from one state $\CausalState$ to the next $\CausalState^\prime$ on seeing
measurement $\meassymbol$. The resulting model, consisting of the causal
states and transitions, is called the process's \emph{\eM}. Given a process
$\mathcal{P}$, we denote its \eM\ by $M(\mathcal{P})$.

Causal states have a Markovian property that they render the past and future
statistically independent; they \emph{shield} the future from the past~\cite{Crut98d}:
\begin{equation}
\Pr(\Past,\Future|\CausalState)
  = \Pr(\Past|\CausalState) \Pr(\Future|\CausalState) ~.
\label{shield}
\end{equation}
Moreover, they are optimally predictive~\cite{Crut88a} in the sense that
knowing which causal state a process is in is just as good as having the
entire past: $\Pr(\Future|\CausalState) = \Pr(\Future|\Past)$. In other
words, causal shielding is equivalent to the fact~\cite{Crut98d} that the
causal states capture all of the information shared between past and future:
$I[\CausalState;\Future] = \EE$.

\EMs\ have an important structural property called
\emph{unifilarity}~\cite{Crut88a,Shal98a}: From the start state, each symbol
sequence corresponds to exactly one sequence of causal
states \footnote{In the theory of computation, unifilarity is
referred to as ``determinism'' \cite{Hopc79}.}. The importance
of unifiliarity, as a property of any model, is reflected in the fact that
representations without unifilarity, such as generic hidden Markov models,
\emph{cannot} be used to directly calculate important system
properties---including the most basic, such as how random a process is.
As a practical result, unifilarity is easy to verify: For each state, each
measurement symbol appears on at most one outgoing transition
\footnote{Specifically, each transition matrix $T^{(x)}$ has, at most, one
nonzero component in each row.}. Thus, the signature of unifilarity is that on
knowing the current state $\CausalState_t$ and measurement $\MeasSymbol_t$,
the uncertainty in the next state $\CausalState_{t+1}$
vanishes: $H[\CausalState_{t+1}|\CausalState_t,\MeasSymbol_t] = 0$.

Out of all optimally predictive models $\PrescientStateSet$---for which
$I[\PrescientState;\Future] = \EE$---the \eM\ captures the minimal
amount of information that a process must store in order to communicate all
of the excess entropy from the past to the future.  This is the Shannon
information contained in the causal states---the
\emph{statistical complexity}~\cite{Crut98d}:
$\Cmu \equiv H[\CausalState] \leq H[\PrescientState]$. It turns out that
statistical complexity upper bounds the excess entropy \cite{Crut88a,Shal98a}:
$\EE \leq \Cmu$. In short, $\EE$ is the
effective information transmission rate of the process, viewed
as a channel, and $\Cmu$ is the memory stored in that channel.

Combined, these properties mean that the \eM\ is the basis against which
modeling should be compared, since it captures all of a process's
information at maximum representational efficiency.

Importantly, due to unifilarity one can calculate the entropy rate directly
from a process's \eM:
\begin{align}
\hmu
  & = H[\MeasSymbol|\CausalState] \nonumber \\
  & = - \sum_{\{\CausalState\}} \Pr(\CausalState)
  \sum_{\{\meassymbol \CausalState^\prime\}} T^{(\meassymbol)}_{\CausalState \CausalState^\prime}
  \log_2 \sum_{\{\CausalState^\prime\}}{T^{(\meassymbol)}_{\CausalState\CausalState^\prime}}
  ~  .
\label{eq:hmuEM}
\end{align}
$\Pr(\CausalState)$ is the asymptotic
probability of
the causal states, which is obtained as the normalized principal eigenvector
of the transition matrix $T = \sum_{\{x\}} T^{(x)}$. A process's statistical
complexity can also be directly calculated from its \eM:
\begin{align}
\Cmu
   & = H[\CausalState] \nonumber \\
   & = - \sum_{\{\CausalState\}} \Pr(\CausalState) \log_2 \Pr(\CausalState) ~.
\end{align}
Thus, the \eM\  directly gives two important properties: a process's rate
($\hmu$) of producing information and the amount ($\Cmu$) of historical
information it stores in doing so. Moreover, Refs.~\cite{Crut08a,Crut08b}
showed how to calculate a process's excess entropy directly from the \eM.

\vspace{-0.1in} 
\subsection{General Presentations}

The \eM\ is only one possible description of a process. There are many
alternatives: Some larger, some smaller; some with the same prediction error,
some with larger prediction error; some that are unifilar, some not; some
that do an excellent job of capturing $\Prob(\Past,\Future)$, many (or most)
doing only an approximate job; some allowing for the direct calculation of
the process's properties, some precluding such calculations.

The \eM, compared to all other possible descriptions, is arguably the best.
The results in the following, as an ancillary benefit, strengthen this
conclusion considerably. However, it is important to keep in mind that due to
implementation constraints or intended use or under specified performance
criteria, alternative models may be desirable and preferred to the \eM.
Reference~\cite{Crut92c}, for example, compares the benefits and disadvantages
of different kinds of nonunifilar models that are smaller than the \eM.
We return to elaborate on this in Sec.~\ref{sec:Nonunifilar}.

One refers to a process's possible descriptions as \emph{presentations}
\cite{Lind95a}.
Specifically, these are state-based models---using states and state
transitions---that exactly describe $\Prob(\Past,\Future)$. That is, given a
finitary process $\Process$, we consider the set of all presentations that
generate the same \emph{process language}:
$\Prob(\MeasSymbol^L), ~ L = 1, 2, \ldots$.
The set of $\Process$'s presentations is the focus of our work here. That is,
we do not address models that give only approximations to the process language.

We refer to these alternative models as \emph{rivals}. A rival consists of
a set $\AlternateStateSet$ of states and state-to-state transitions
$T_{\AlternateState\AlternateStatePrime}^{(s)}$ over the symbols $s$ in the
process's measurement alphabet $\MeasAlphabet$. There is an associated mapping
$\eta: \past \rightarrow \AlternateStateSet$ from pasts to rival states. When
we refer to the rival's state as a random variable, we will denote this
$\AlternateState = \eta(\Past)$. We use lower case $\alternatestate$ when we
refer to a particular realization: $\AlternateState = \alternatestate$,
$\alternatestate \in \AlternateStateSet$. Just as with the \eM, given a rival
presentation, we can refer to the amount of information the rival states
contain---this is the \emph{presentation state entropy} $H[\AlternateState]$.

Above, we noted that a process's \eM\ is its minimal unifilar presentation.
But, how are the rivals related, if at all, to the \eM? To explore the
organization of the space of rivals, in the following we relax properties that
make the \eM\ unique, working with presentations that are nonminimal unifilar
and those that are not even unifilar. And so, we must distinguish several kinds
of presentation. First, we extend unifilarity to presentations, generally.
\begin{Def}
A presentation is \emph{unifilar} if and only if
$H[\AlternateState_{t+1} | \AlternateState_t, \MeasSymbol_t ] = 0$.
\end{Def}
\noindent
Second, we introduce the notion of reverse-time unifilarity.
\begin{Def}
A presentation is \emph{counifilar} if and only if
$H[\AlternateState_t | \MeasSymbol_t, \AlternateState_{t+1} ] = 0$.
\end{Def}
\noindent
Third, we will consider prescient presentations, those whose states are as good
at predicting as the \eM's causal states \cite{Crut98d,Shal98a}.
\begin{Def}
A presentation is \emph{prescient} if and only if, for all pasts
$\past \in \AllPasts$:
\begin{equation}
\Prob(\Future^L|\AlternateState = \eta(\past)) =
\Prob(\Future^L|\CausalState = \epsilon(\past) ) ~,
\end{equation}
for all $L \geq 1, 2, 3, \ldots$.
\end{Def}
\noindent
We will also shortly discuss presentations to which one can or cannot
synchronize---that are or are not controllable.

\vspace{-0.1in} 
\section{State-Block and Block-State Entropies}

Now, we introduce two block entropies and discuss their properties,
but first, we recall some well known results from information
theory~\cite[Sec. 4.2]{Cove06a}.

For any stationary stochastic process,
$\Delta \BE{L}$ is a nonincreasing sequence of nonnegative terms that
converges, from above, to the entropy rate $\hmu$.
There is a complementary result which provides an estimate of the entropy
rate that converges from below.  It is typically stated in terms of the
Moore (state-output) type of hidden Markov model~\cite[Thm. 4.5.1]{Cove06a},
so we recast the theorem in terms of the Mealy (edge-output) type of hidden
Markov models, used exclusively here.

\begin{The}
If $\AlternateState_0, \AlternateState_1, \ldots$ form a stationary Markov chain
and $(\MeasSymbol_i, \AlternateState_{i+1}) = \phi(\AlternateState_i)$, then
\begin{equation}
 H[\MeasSymbol_L | \AlternateState_0, \MeasSymbol_0^L]
  \leq \hmu \leq
 H[\MeasSymbol_L|\MeasSymbol_0^L] ~,
\end{equation}
$L = 0, 1, 2, \ldots$, and
\begin{equation}
H[\MeasSymbol_\infty | \AlternateState_0, \Future_0] = \hmu ~.
\end{equation}
$\phi$ need not be a deterministic mapping.
\label{sbeconvergence}
\end{The}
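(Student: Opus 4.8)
The plan is to separate the statement into three pieces: the upper bound $\hmu \le H[\MeasSymbol_L|\MeasSymbol_0^L]$, the lower bound $H[\MeasSymbol_L|\AlternateState_0,\MeasSymbol_0^L] \le \hmu$, and the limit $H[\MeasSymbol_\infty|\AlternateState_0,\Future_0] = \hmu$ (read as $\lim_{L\to\infty} H[\MeasSymbol_L|\AlternateState_0,\MeasSymbol_0^L]$). The upper bound needs no presentation structure: as recalled just above, $\Delta H[\MeasSymbol_0^{L+1}] = H[\MeasSymbol_L|\MeasSymbol_0^L]$ is a nonincreasing sequence converging to $\hmu$ from above, so every term dominates $\hmu$. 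The remaining two pieces exploit the single structural input, that $(\MeasSymbol_i,\AlternateState_{i+1}) = \phi(\AlternateState_i)$ makes $\{\AlternateState_i\}$ Markov with outputs generated on the transitions. Nowhere will determinism of $\phi$ be used, which is precisely why the closing remark holds.

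For the lower bound, I would invoke the defining conditional independence of a hidden Markov chain: conditioned on the present state $\AlternateState_0$, the future variables $\MeasSymbol_0,\MeasSymbol_1,\ldots,\AlternateState_1,\AlternateState_2,\ldots$ are independent of the entire past $\ldots,\AlternateState_{-1},\MeasSymbol_{-1}$. Since both $\MeasSymbol_L$ and the conditioning block $\MeasSymbol_0^L$ lie in the future of $\AlternateState_0$, appending the whole past to the conditioning set leaves the entropy unchanged: $H[\MeasSymbol_L|\AlternateState_0,\MeasSymbol_0^L] = H[\MeasSymbol_L|\AlternateState_0,\MeasSymbol_0^L,\MeasSymbol_{-1},\MeasSymbol_{-2},\ldots]$. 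Dropping $\AlternateState_0$ can only increase the entropy, and by stationarity the result collapses to $H[\MeasSymbol_0|\Past] = \hmu$. This establishes the bound for every $L$.

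For the limit, I would first show the lower-bound sequence $a_L \equiv H[\MeasSymbol_L|\AlternateState_0,\MeasSymbol_0^L]$ is nondecreasing, so that it converges from below. Writing $a_L = H[\MeasSymbol_{L+1}|\AlternateState_1,\MeasSymbol_1^L]$ by stationarity, I would insert $\AlternateState_1$ into the conditioning of $a_{L+1}$ (which only lowers the entropy) and then use conditional independence of $\MeasSymbol_{L+1}$ from $(\AlternateState_0,\MeasSymbol_0)$ given $(\AlternateState_1,\MeasSymbol_1^L)$ to collapse the result back to $a_L$; hence $a_{L+1} \ge a_L$. To pin the limit at $\hmu$, note $H[\MeasSymbol_L|\MeasSymbol_0^L] - a_L = I[\AlternateState_0;\MeasSymbol_L|\MeasSymbol_0^L]$ and sum over $L$ via the chain rule, $\sum_{L \ge 0} I[\AlternateState_0;\MeasSymbol_L|\MeasSymbol_0^L] = I[\AlternateState_0;\Future_0] \le H[\AlternateState_0] < \infty$. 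A convergent series of nonnegative terms has vanishing terms, so $I[\AlternateState_0;\MeasSymbol_L|\MeasSymbol_0^L] \to 0$; combined with $H[\MeasSymbol_L|\MeasSymbol_0^L]\to\hmu$ this forces $a_L \to \hmu$.

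The main obstacle is this last step. The two bounds are essentially bookkeeping with conditioning inequalities and the Markov property, but pinning the lower-bound limit exactly at $\hmu$ relies on the telescoping mutual-information identity together with finiteness of the state entropy $H[\AlternateState_0]$, an assumption that holds for the finitary presentations of interest but must be stated explicitly. I would flag that this is exactly where the finitary hypothesis enters, and where the Mealy recasting (symbols emitted on transitions rather than read off states) has to be checked not to disturb the conditional-independence relations on which the whole argument rests.
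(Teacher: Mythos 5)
Your proposal is correct and follows essentially the same route as the paper's proof in its appendix: the same conditional-independence-plus-stationarity argument for the lower bound (the paper conditions on a finite past block of length $k$ and lets $k\to\infty$ where you condition on the infinite past directly), the same insertion of $\AlternateState_1$ to get monotonicity, and the identical telescoping bound $\sum_L I[\AlternateState_0;\MeasSymbol_L|\MeasSymbol_0^L] \le H[\AlternateState_0] < \infty$ to force the limit to $\hmu$. Your explicit flagging of the finiteness of $H[\AlternateState_0]$ is a point the paper leaves implicit.
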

\noindent
Appendix~\ref{sbelowerboundhmu} provides the proof details.
Henceforth, we refer to $\AltSBE{L}$ as the \emph{state-block entropy}.

We also define the \emph{block-state} entropy to be $\AltBSE{L}$. As with
the state-block entropy, there is a corresponding convergence result.
\begin{The}
If $\AlternateState_0, \AlternateState_1, \ldots$ form a stationary Markov chain
and $(\MeasSymbol_i, \AlternateState_{i+1}) = \phi(\AlternateState_i)$, then
\begin{equation}
 \AltBSE{L} - \AltBSE{L-1}
  \leq \hmu \leq
 H[\MeasSymbol_L|\MeasSymbol_0^L] ~,
\end{equation}
$L = 1, 2, 3, \ldots$, and
\begin{equation}
\lim_{L\rightarrow \infty} \biggl( \AltBSE{L} - \AltBSE{L-1} \biggr) = \hmu ~.
\end{equation}
\label{bseconvergence}
Again, $\phi$ need not be a deterministic mapping.
\end{The}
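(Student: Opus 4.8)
The plan is to isolate what is genuinely new. The right-hand inequality, $\hmu \le H[\MeasSymbol_L \mid \MeasSymbol_0^L]$, is free: it is exactly the statement recalled above that $\hmu(L+1) = H[\MeasSymbol_L \mid \MeasSymbol_0^L]$ is a length-$L$ estimate of the entropy rate converging to $\hmu$ from above, and it refers to the presentation $\AlternateState$ not at all. So the whole content lies in the left inequality and the limit for the block-state increment $\AltBSE{L} - \AltBSE{L-1}$.

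First I would reduce this increment to a single conditional entropy. Using stationarity to shift the block $(\MeasSymbol_0^{L-1},\AlternateState_{L-1})$ forward by one step, $\AltBSE{L-1} = H[\MeasSymbol_1^{L-1}, \AlternateState_L]$, so a chain rule gives
\begin{equation}
\AltBSE{L} - \AltBSE{L-1} = H[\MeasSymbol_0 \mid \MeasSymbol_1^{L-1}, \AlternateState_L] ~.
\end{equation}
Now the Markov structure makes $\AlternateState_L$ a cut point: conditioned on $\AlternateState_L$, the symbols before time $L$ are independent of those from time $L$ on (the shielding property, here for a general, possibly nonunifilar, presentation). Hence adjoining the shielded tail $\MeasSymbol_L^{\infty}$ to the conditioning leaves the increment unchanged, $H[\MeasSymbol_0 \mid \MeasSymbol_1^{L-1}, \AlternateState_L] = H[\MeasSymbol_0 \mid \MeasSymbol_1^{\infty}, \AlternateState_L]$, and dropping $\AlternateState_L$ can only raise the entropy, so the increment is at most $H[\MeasSymbol_0 \mid \MeasSymbol_1^{\infty}] = \hmu$. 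This establishes the left inequality and exhibits the gap as a conditional mutual information, $\hmu - (\AltBSE{L}-\AltBSE{L-1}) = I[\MeasSymbol_0 ; \AlternateState_L \mid \MeasSymbol_1^{\infty}] \ge 0$.

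For the limit I would not attack $I[\MeasSymbol_0;\AlternateState_L \mid \MeasSymbol_1^{\infty}] \to 0$ head on, but reduce to Theorem~\ref{sbeconvergence} by time reversal. The key fact is that the block-state entropy of a presentation equals the state-block entropy of its reversal: writing $\AlternateState'_j = \AlternateState_{L-j}$ and $\MeasSymbol'_j = \MeasSymbol_{L-1-j}$, stationarity gives $\AltBSE{L} = H[\MeasSymbol_0^L, \AlternateState_L] = H[\AlternateState'_0, (\MeasSymbol')_0^L]$, which is precisely the state-block entropy of the reversed presentation. The reversal of a stationary Markov chain is again a stationary Markov chain, the same cut-point property shows that $(\MeasSymbol'_i, \AlternateState'_{i+1})$ is generated from $\AlternateState'_i$ alone, and the entropy rate is reversal invariant. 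Thus the reversed presentation satisfies the hypotheses of Theorem~\ref{sbeconvergence}, and its state-block increments---identical to the forward block-state increments---are bounded by $\hmu$ and converge to $\hmu$, yielding both claims at once.

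The hard part is the step I just finessed: confirming that the time-reversed presentation really is a Mealy hidden Markov model of the required form, i.e., that the reversed emission-and-transition pair depends only on the reversed current state. This is exactly the forward cut-point property ``the past of $\AlternateState_m$ is independent of its future given $\AlternateState_m$,'' but it must be checked carefully through the index bookkeeping, and it is what lets the delicate convergence be inherited from Theorem~\ref{sbeconvergence} rather than reproved. If one insists on the direct route, the obstacle reappears as the need to show $I[\MeasSymbol_0;\AlternateState_L \mid \MeasSymbol_1^{\infty}] \to 0$, which demands a continuity (telescoping or martingale) argument for the far-field state and is considerably more work than the reversal reduction.
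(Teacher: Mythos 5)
Your proof is correct, but there is nothing in the paper itself to compare it against: the paper explicitly defers the proof of this theorem to Ref.~\cite{Maho10a}, providing an in-text argument only for its sibling, Thm.~\ref{sbeconvergence} (App.~\ref{sbelowerboundhmu}). Measured against that sibling proof, your route is essentially its mirror image, made explicit. The reduction of the increment to $H[\MeasSymbol_0 \mid \MeasSymbol_1^{L-1}, \AlternateState_L]$ via stationarity and the chain rule is right, the use of the cut-point (shielding) property of a general, possibly nonunifilar, Mealy presentation to adjoin the future $\MeasSymbol_L^\infty$ is legitimate (the paper itself asserts this shielding in App.~\ref{zeroatoms}), and the identification $H[\MeasSymbol_0 \mid \MeasSymbol_1^\infty] = \hmu$ is the standard reversal-invariance of the entropy rate for stationary processes, which the paper also invokes elsewhere (e.g., Eq.~\eqref{eq:pcproof-stationarity1}). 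Your time-reversal reduction for the limit is sound: the reversed state chain is again a stationary Markov chain, the forward shielding property is exactly what makes $(\MeasSymbol'_i, \AlternateState'_{i+1})$ a (stochastic) function of $\AlternateState'_i$ alone, and $\AltBSE{L}$ is the state-block entropy of the reversal, so Thm.~\ref{sbeconvergence} applies verbatim. It is worth noting that the reduction buys you nothing you could not also get directly: the ``hard'' convergence step, unwound through the reversal, is precisely the telescoping bound $\sum_k I[\AlternateState'_0; \MeasSymbol'_k \mid (\MeasSymbol')_0^k] \leq H[\AlternateState'_0] < \infty$ of App.~\ref{sbelowerboundhmu} written in reversed coordinates, so the direct route and the reversal route are the same argument in different clothing. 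The one hypothesis you should state explicitly, since both routes need it for the summability step, is that $H[\AlternateState_0]$ is finite, i.e., the presentation has finitely many (or at least finite-entropy) states.
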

Ref.~\cite{Maho10a} provides the proof of this theorem and discusses related
results in the context of crypticity and cryptic order \cite{Maho09a}.

Note, both of these theorems hold for general presentations---not
just \eMs---and this fact serves as the motivation for our later
generalizations.

\vspace{-0.1in} 
\subsection{Convergence Hierarchies}

Just as with the block entropy $\BE{L}$, we will consider $L$-derivatives
and integrals of the state-block and block-state entropies. At the first level,
\begin{align}
\Delta \AltSBE{L} &\equiv  \AltSBE{L} - \AltSBE{L-1} ~,\\
\Delta \AltBSE{L} &\equiv  \AltBSE{L} - \AltBSE{L-1} ~.
\end{align}
Higher-order derivatives are defined similarly to Eq.~\eqref{eq:DiffOperator}.
As before, the $n=0$ case is an identity operator. So, for example,
$\Delta^0 \AltSBE{L} = \AltSBE{L}$.

We already know---Thms. \ref{sbeconvergence} and \ref{bseconvergence}---that
both of these quantities tend to $\hmu$ in the large-$L$ limit,
ensuring that all higher-order derivatives tend to zero.

Now, consider the $n^\text{th}$ state-block and block-state integrals:
\begin{align}
\mathcal{K}_n &= \sum_{L=n}^\infty \bigl( \Delta^n \AltSBE{L}
  - \lim_{\ell\rightarrow\infty} \Delta^n \AltSBE{\ell} \bigr) ~,\\
\mathcal{J}_n &= \sum_{L=n}^\infty \bigl( \Delta^n \AltBSE{L}
  - \lim_{\ell\rightarrow\infty} \Delta^n \AltBSE{\ell} \bigr) ~.
\end{align}
Note that both $\mathcal{K}_0 \geq 0$ and $\mathcal{J}_0 \geq 0$ while, in
contrast, $\mathcal{I}_0 \leq 0$. Also, $\mathcal{K}_1 \leq 0$ and
$\mathcal{J}_1 \leq 0$ while $\mathcal{I}_1 \geq 0$. These differences are
due to the fact that the block entropy is concave in $L$ while the state-block
and block-state entropies are convex.

Consider the partial sums of $\mathcal{K}_1$---the state-block integral:
\begin{align}
  \mathcal{K}_1(L)
    &= \sum_{\ell=1}^L \bigl( \Delta \AltSBE{\ell} - \hmu \bigr)
	\nonumber \\
    &= \AltSBE{L} - \AltSBE{0} - L \hmu
	\nonumber \\
    &= H[\MeasSymbol_0^L|\AlternateState_0] - L \hmu ~.
\end{align}
Note that if the presentation is unifilar, then
$H[\MeasSymbol_0^L | \AlternateState_0] = L \hmu$ and $\mathcal{K}_1(L) = 0$.
Thus, unifilarity is a sufficient condition for $\mathcal{K}_1 = 0$, but it is
not a necessary condition.

Now, consider the partial sums of $\mathcal{J}_1$---the block-state integral:
\begin{align}
  \mathcal{J}_1(L)
    &= \sum_{\ell=1}^L \bigl( \Delta \AltBSE{\ell} - \hmu \bigr)
	\nonumber \\
    &= \AltBSE{L} - \AltBSE{0} - L \hmu
	\nonumber \\
    &= \AltBSE{L} - \AltBSE[L]{0} - L \hmu
	\nonumber \\
    &= H[\MeasSymbol_0^L|\AlternateState_L] - L \hmu ~.
\end{align}
Similarly, if the presentation is counifilar, then it follows that
$H[\MeasSymbol_0^L | \AlternateState_L] = 0$ and $\mathcal{J}_1(L) = 0$.
So, counifilarity is a sufficient condition for $\mathcal{J}_1 = 0$, but
it is not a necessary condition.

\vspace{-0.2in} 
\subsection{Asymptotics}

Theorems~\ref{sbeconvergence} and \ref{bseconvergence} tell us $\SBE{L}$ and
$\BSE{L}$ are convex functions in $L$ and that the slope limits to the
entropy rate. This means that each curve converges to a linear asymptote,
cf. Eq.~\eqref{eq:beasymptote}:
\begin{align}
\label{eq:sbeasymptote}
\AltSBE{L} &\propto Y_\text{SBE} + \hmu L \\
\label{eq:bseasymptote}
\AltBSE{L} &\propto Y_\text{BSE} + \hmu L ~,
\end{align}
where $Y_\text{SBE}$ and $Y_\text{BSE}$ are constants independent of $L$.
The pictures that one should have in mind for the growth of these entropies
are those of Figs. \ref{fig:SI}, \ref{fig:eMEntropyGrowth},
\ref{fig:ASEntropyGrowth}, \ref{fig:UnifEntropyGrowth}, and
\ref{fig:NUnifEntropyGrowth}, which we will discuss in due course.

In fact, we will take this behavior as the definition of the following
linear asymptotes:
\begin{gather}
\begin{aligned}
Y_\text{SBE} & \equiv \lim_{L\rightarrow\infty}
                      \biggl( \AltSBE{L} - \hmu L \biggr) \\
             & = \lim_{L\rightarrow\infty}
                      \biggl( H[\AlternateState_0]
                  + H[\MeasSymbol_0^L | \AlternateState_0] - \hmu L \biggr) \\
             & = H[\AlternateState_0] + \mathcal{K}_1
\end{aligned}
\label{eq:sbeasymptote2}
\end{gather}
and
\begin{gather}
\begin{aligned}
Y_\text{BSE} & \equiv \lim_{L\rightarrow\infty}
                      \biggl( \AltBSE{L} - \hmu L \biggr) \\
             & = \lim_{L\rightarrow\infty}
                      \biggl( H[\AlternateState_L]
                  + H[\MeasSymbol_0^L | \AlternateState_L] - \hmu L \biggr) \\
             & = H[\AlternateState_0] + \mathcal{J}_1 ~.
\end{aligned}
\label{eq:bseasymptote2}
\end{gather}
These tell us that $\mathcal{K}_1$ and $\mathcal{J}_1$ are not the
sublinear parts of the state-block and block-state entropies. This is in
contrast to the corresponding result for the block entropies:
\begin{align}
Y_\text{BE} &\equiv \lim_{L \rightarrow \infty}
                      \biggl (\BE{L} - \hmu L \biggr) \\
            &= \lim_{L \rightarrow \infty}
                      \biggl (\BE{0} + \BE{L} - \hmu L \biggr) \\
            &=\BE{0} + \mathcal{I}_1 ~.
\end{align}
The term $\BE{0}$ was dropped in the earlier partial sum formulation---i.e.,
Eq.~(\ref{eq.ExcessEntropyPartialSum})---since it corresponds to no measurement
being made and so is zero. It is reintroduced above, though, to complete the
formal parallel to the state-block and block-state entropy cases.

The result for block entropy is that the offset of the linear asymptote was
equal to the $\mathcal{I}_1$, the excess entropy. However, the argument just
given clearly establishes that, in fact, one should think of the first
derivatives as offsets from the initial value of their corresponding curves.

Finally, recall that $\mathcal{K}_1$ and $\mathcal{J}_1$ are not greater than
zero, so $Y_\text{SBE}$ and $Y_\text{BSE}$ are less than or equal to the
presentation state entropy $H[\AlternateState_0]$.

\vspace{-0.2in} 
\section{Synchronization}

\vspace{-0.1in} 
\subsection{Duality of Synchronization and Control}

Synchronization is a question about how an observer comes to know a process's
(typically hidden) current internal state through observations. (Recall the
picture introduced in Ref.~\cite{Crut01b}.) As such, it
requires a notion of state, either the process's causal state (using the
\eM) or the state of some other presentation. In either case we monitor the
observer's uncertainty over the states $\AlternateStateSet$ after having seen
a series of measurements $w = \meassymbol_0\meassymbol_2\ldots\meassymbol_{L-1}$
using the conditional state entropy $H[\AlternateState|w]$. When this
vanishes, the observer is synchronized and we call $w$ a
\emph{synchronizing word}.

During synchronization, the observer updates her answer to the question,
``Which presentation states can be reached by sequence $w$?'' When
there is a unique answer, the observer is synchronized. If the eventual answer,
though, is only a proper subset of presentation states, then
$0 < H[\AlternateState|w] \le H[\AlternateState]$ and the observer can be said
to be partially synchronized.

A formal treatment of synchronization appears in Refs.~\cite{Trav10a,Trav10b},
which define asymptotic synchronization as follows.

\begin{Def}
A presentation is \emph{weakly asymptotically synchronizing} if and only if
$\lim_{L \rightarrow \infty} H[\AlternateState_L | \MeasSymbol_{0}^L ] = 0$.
\end{Def}

While some processes can have synchronizing words, others have
\emph{synchronizing blocks} where every word of a finite length $R$ is a
synchronizing word. Such processes are called \emph{Markov processes}.
The smallest such $R$ is the \emph{Markov order} \cite{Maho09a,Jame10a}. It
turns out that the \eM\ presentation for a Markov process is
\emph{exactly synchronizing}~\cite{Trav10a}: for finite $R$,
$H[\CausalState_0|\MeasSymbol_0^L] = 0, L \geq R$.

If a process admits a presentation that is only weakly asymptotically
synchronizing, though, then an observer will be in various conditions of
state uncertainty until the limit $L \rightarrow \infty$. Finitary \eMs,
as it turns out, are always weakly asymptotically synchronizing and the
state uncertainty vanishes exponentially fast~\cite{Trav10b}:
$\Prob \left( H[\CausalState_0|\MeasSymbol_0^L] > 0 \right) \propto e^{-L}$.

The controllability properties of a process and its models are analogous.
However, now there is a designer that has built an implementation of a process.
And, starting from an unknown condition, the designer wishes to prepare the
process in a particular state or set of states by imposing a sequence of inputs.
Phrased this way, one sees that the implementation is, in effect, a presentation
and the control sequence is none other than a synchronizing word. Due to this
duality, we only discuss synchronization in the bulk of our development,
returning at the end to briefly draw out interpretations of the results for
controllability.

\vspace{-0.2in} 
\subsection{Synchronizing to the \EM}
\label{synceM}
\vspace{-0.1in} 

We noted that the \eM\ directly gives two important information-theoretic
properties---the entropy rate ($\hmu$) and the statistical complexity
($\Cmu$)---and one (the excess entropy $\EE$) indirectly. The difference between
$\Cmu$ and $\EE$ was introduced as the \emph{crypticity}~\cite{Crut08a,Crut08b}
\begin{align}
\PC = \Cmu - \EE
\label{eq:crypticity}
\end{align}
to describe how much of the internal state information ($\Cmu$) is not locally
present in observed sequences ($\EE$).

Synchronization, as we discussed, is a property of the recurrent portion of the \eM\ and since it
is unifilar, if one knows its current state and follows transitions according
to the word being considered, then one will always know the \eM's final state.
However, it is also useful to consider the scenario when one does not know the
\eM's current state. Given no other information, the best estimate for the
current state is to draw from the stationary state distribution
$\Prob(\CausalState)$. Then, as each symbol is observed, one updates this
\emph{belief} distribution and estimates the next state from this updated
distribution.

As noted above, $H[\CausalState_L | \MeasSymbol_0^L]$ converges to zero
exponentially fast for all \eMs\ with a finite number of recurrent causal
states. At each $L$ before that point, there is an uncertainty in the causal
state of the \eM. If we add up the uncertainty at each length, then we have
the \emph{synchronization information}:
\begin{align}
\label{eq:synchronization}
 \SI & \equiv \sum_{L=0}^\infty H[\CausalState_L | \MeasSymbol_0^L]\\
\label{eq:sibse}
     & = \sum_{L=0}^\infty \biggl( \BSE{L} - \BE{L} \biggr) ~.
\end{align}
Importantly, the second line shows that synchronization information can be
visualized as the sum of all differences between the block-state and the
block entropy curves. Moreover, starting from Eq.~(\ref{eq:sibse}) we find:
\begin{align}
\label{eq:sipmlhmu}
 \SI & = \sum_{L=0}^\infty \biggl( \BSE{L} - (\EE + L\hmu) \biggr) \nonumber \\
     &\qquad -\sum_{L=0}^\infty \biggl( \BE{L} - (\EE + L\hmu) \biggr)\\
\label{eq:sij0i0}
     & = \mathcal{J}_0 - \mathcal{I}_0 ~.
\end{align}
We know that $\TI = -\mathcal{I}_0$. When we identify $\mathcal{J}_0$ with a
separate, nonnegative information quantity we conclude immediately that
$\SI \geq \TI$. This relationship is shown graphically in Fig.~\ref{fig:SI}.

\begin{figure}
\centering
\ifcolor
    \includegraphics[width=\columnwidth]{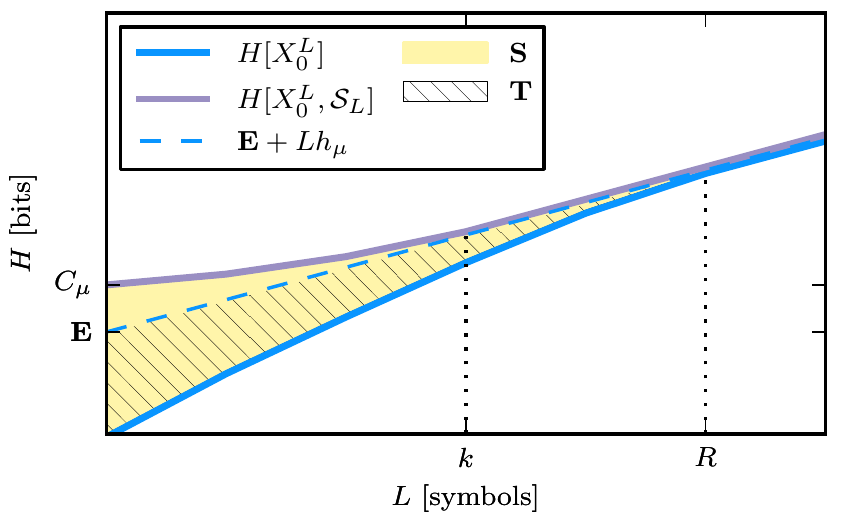}
\else
    \includegraphics[width=\columnwidth]{images/SvsT_gray}
\fi
\caption{Block entropy and block-state entropy growth for a generic finitary
  stationary process: It is easily seen that the synchronization information
  upper bounds the transient information, $\TI \leq \SI$, as $\TI$ is a
  component of $\SI$. The Markov order $R$ and cryptic order $k$ are also
  shown in their proper relationship $k \leq R$: $R$ indicates where the block
  entropy meets the $\EE + \hmu L$ asymptote and $k$, where the block-state
  entropy meets the same asymptote.
  }
\label{fig:SI}
\end{figure}

The \emph{cryptic order} $k$, as defined in Ref.~\cite{Maho09a}, can be
interpreted as the length at which the block-state curve has converged to its
asymptote:
$\EE + \hmu L$. Surprisingly, this is not the length at which an \eM\
can be considered synchronized, which is given by the Markov order $R$.
Given its definition as
the smallest value $L$ for which $H[\CausalState_L | \Future_0] = 0$, we see
that the cryptic order can be interpreted as a measure of how far back in time
the state sequence can be retrodicted from the distant future.

For example, the Even Process consists of all bi-infinite sequences that
contain even-length stretches of $1$s separated by at least a single $0$;
see Ref.~\cite{Crut01a}.
This process cannot be considered synchronized at any finite length because all
the thus-far seen symbols may be $1$s, and so one does not know if the latest
symbol is a $1$ at an even- or odd-valued location. In contrast, once a $0$ has
been seen, we know instantly the evenness and oddness of each preceding $1$,
making the cryptic order $k = 0$. Since the cryptic order $k = 0$ for the Even
Process, one concludes that $\mathcal{J}_0$ does not contribute to $\SI$ and
$\TI = \SI$.

The two pieces---$\mathcal{J}_0$ and $\mathcal{I}_0$---comprising $\SI$ are both
finite due to the exponentially fast convergence of the two block-entropy
curves~\cite{Trav10b}. This shows that $\SI$ consists of distinct information
contributions drawn from different process features. Referring to
Fig.~\ref{fig:SI}, the lower piece, the
transient information $\TI$, is information recorded due to an over-estimation
of the entropy rate $\hmu$ at block lengths $L$ less than the Markov order $R$.
This over-estimation is due, in effect, to $L$ being shorter than the longest
correlations in the data. In a complementary way, the upper portion
$\mathcal{J}_0$ can be viewed as the amount of state information that cannot
be retrodicted, even given the infinite future.

The relative roles of the contributions to synchronization information can be
clearly seen for one-dimensional range-$\MOrder$ spin systems.
Reference~\cite{Crut01a} claimed that for spin chains:
\begin{align}
  \SI = \TI + \frac{1}{2}\MOrder(\MOrder+1)\hmu ~,
\end{align}
where $\MOrder$ is the coupling range (Markov order) of the spin chain. This
can be established rather directly, and understood for the first time, using
the geometric convergence picture just introduced for $\SI$. First,
Ref.~\cite{Maho10a} showed that for a spin chain \BSE{L} is flat (zero slope)
for $0 \leq L \leq R$, after which it converges to its asymptote. Second,
combining these, we have:
\begin{align}
  \mathcal{J}_0 & = \sum_{L=0}^\MOrder \BSE{L} - (\EE + L\hmu) \\
                & = \sum_{L=0}^\MOrder (\EE + R\hmu) - (\EE + L\hmu) \\
                & = \sum_{L=0}^\MOrder (\MOrder-L)\hmu \\
                & = \frac{1}{2}\MOrder(\MOrder+1)\hmu ~.
\end{align}
So, the amount of state information that cannot be retrodicted is quadratic in
Markov order.

Finally, \BE{L} and \BSE{L} give lower and upper bounds on \EE, respectively:
the first monotonically approaches $\EE + L\hmu$ from below and the second
monotonically approaches it from above. This way, given an \eM, it is simple
to compute $\EE$ with any accuracy required from the block
entropies, which themselves can be efficiently estimated from the \eM.
Similarly, since \BE{L} over-estimates the entropy rate while approaching from
above and \BSE{L} under-estimates the entropy rate while approaching from below,
one obtains an analogous pair of bounds on $\hmu$. This block-state technique
for bounding the entropy rate, however, holds for any type of presentation of
the process. (Cf. Ref.~\cite[Sec. 4.5]{Cove06a}.)

\vspace{-0.2in} 
\section{Presentation Quantifiers}

The development and results have focused, so far, on \eMs\ and their
information-theoretic properties. Due to the \eM's uniqueness, these were also
properties of the corresponding processes themselves. Now, we relax the defining
characteristics of \eMs\ to consider generic presentations. Naturally, this
destroys our ability to directly identify presentation properties with those of
the process represented. A process's entropy rate ($\hmu$) and excess entropy
($\EE$) remain unchanged, however, since they are defined solely through its observables
$\Prob(\Past, \Future)$. Widening our purview to generic presentations leads us to
briefly introduce several new properties that capture information processing
in presentations. Perhaps more distinctly, this also leads us to quantify
the kinds of information in a presentation that are \emph{not} characteristics
of the process it represents. Section~\ref{sec:classifying} then provides more
detailed expositions on their meaning and example processes to illustrate them.

\vspace{-0.2in} 
\subsection{Crypticity}

The statistical complexity $\Cmu$ is the amount of information a process must
store in order to generate future behavior. The crypticity $\PC$ is that part of
$\Cmu$ not transmitted to the future: $\PC = \Cmu - \EE$. Roughly, it can be
thought of as the irreducible overhead that arises from the process's causal
structure.  Reference~\cite{Crut08a} defined crypticity for \eMs\ as
$\PC = H[\CausalState_0 | \Future_0]$.  Now, we generalize this to define
crypticity for generic presentations.

\begin{Def}
The \emph{presentation crypticity} $\PC(\AlternateState)$ is the amount of
state information shared with the past that is not transmitted to the future:
\begin{align}
\PC \equiv I[\Past_0; \AlternateState_0 | \Future_0] ~.
\end{align}
\end{Def}

When the presentation states are causal states, this quantity reduces to the
original definition---the process's crypticity. Furthermore, the crypticity
is the difference between the presentation state entropy and the
$y$-intercept of block-state entropy curve, Eq.~\eqref{eq:bseasymptote}.

\begin{The}
The presentation crypticity $\PC(\AlternateState)$ is the difference between
the presentation state entropy $H[\AlternateState_0]$ and the
sublinear part of the block-state entropy:
\begin{align}
\PC = -\mathcal{J}_1 ~.
\end{align}
\end{The}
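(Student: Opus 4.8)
The plan is to prove the symbolic form $\PC = -\mathcal{J}_1$ by showing that both sides equal the single quantity $I[\Past_0;\AlternateState_0] - \EE$. On the information-theoretic side I would reduce the conditional mutual information defining $\PC$ to this expression using the Markovian shielding property of the presentation states together with the chain rule for mutual information. On the entropy-convergence side I would evaluate $-\mathcal{J}_1$ directly from the partial-sum formula $\mathcal{J}_1(L) = H[\MeasSymbol_0^L | \AlternateState_L] - L\hmu$ already established for the block-state integral, using stationarity and the known block-entropy asymptote $\lim_{L\to\infty}(\BE{L}-\hmu L)=\EE$. Equating the two then gives the result; and since the asymptotics section showed $Y_\text{BSE} = H[\AlternateState_0]+\mathcal{J}_1$, this simultaneously confirms the stated ``in words'' form $\PC = H[\AlternateState_0] - Y_\text{BSE}$.

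For the information identity, the states $\AlternateState_0$ of any presentation of the assumed hidden-Markov type shield the past from the future, $I[\Past_0;\Future_0|\AlternateState_0]=0$: conditioned on the state at time $0$, the forward state sequence $\AlternateState_1,\AlternateState_2,\dots$ (hence the emitted future) is independent of the backward sequence (hence the emitted past), by the Markov property of $\{\AlternateState_i\}$ and the independence of the emission randomness. Granting this, I would expand $I[\Past_0;\AlternateState_0,\Future_0]$ two ways:
\begin{align}
I[\Past_0;\AlternateState_0,\Future_0]
  &= I[\Past_0;\Future_0] + I[\Past_0;\AlternateState_0|\Future_0] \nonumber \\
  &= \EE + \PC ~,
\end{align}
and, expanding on $\AlternateState_0$ first and using shielding,
\begin{align}
I[\Past_0;\AlternateState_0,\Future_0]
  &= I[\Past_0;\AlternateState_0] + I[\Past_0;\Future_0|\AlternateState_0] \nonumber \\
  &= I[\Past_0;\AlternateState_0] ~.
\end{align}
Equating the two yields $\PC = I[\Past_0;\AlternateState_0] - \EE$.

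For the convergence side, I would start from $\mathcal{J}_1 = \lim_{L\to\infty}\bigl(H[\MeasSymbol_0^L|\AlternateState_L] - L\hmu\bigr)$ and shift indices by $-L$ using stationarity of the joint $(\MeasSymbol,\AlternateState)$ process, giving $H[\MeasSymbol_0^L|\AlternateState_L] = H[\MeasSymbol_{-L}^L|\AlternateState_0]$ with $\MeasSymbol_{-L}^L=\MeasSymbol_{-L}\cdots\MeasSymbol_{-1}$ the most recent $L$ symbols. Rewriting $H[\MeasSymbol_{-L}^L|\AlternateState_0] = \BE{L} - I[\MeasSymbol_{-L}^L;\AlternateState_0]$ and letting $L\to\infty$, with $\lim_L(\BE{L}-\hmu L)=\EE$ and $\MeasSymbol_{-L}^L \uparrow \Past_0$, gives $-\mathcal{J}_1 = I[\Past_0;\AlternateState_0] - \EE$. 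Combined with the previous paragraph, $\PC = -\mathcal{J}_1$.

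The hard part is the shielding claim $I[\Past_0;\Future_0|\AlternateState_0]=0$ for \emph{generic} presentations: the excerpt states shielding only for causal states, so here it must be argued from the assumed hidden-Markov structure and must hold even when the presentation is nonunifilar, so that $\AlternateState_0$ need not be a function of the past. Note this step is genuinely essential, since without it the two-way expansion leaves an extra $I[\Past_0;\Future_0|\AlternateState_0]$ term. A secondary technicality is justifying $\lim_L I[\MeasSymbol_{-L}^L;\AlternateState_0] = I[\Past_0;\AlternateState_0]$, which is routine for finitary processes (finite state set, monotone bounded convergence of mutual information) but should be stated explicitly.
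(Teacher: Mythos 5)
Your proposal is correct, and it reaches the result by a genuinely different decomposition than the paper's. The paper works entirely with the finite-$L$ crypticity $I[\MeasSymbol_{-L}^L;\AlternateState_0|\MeasSymbol_0^L]$ and, after applying the same shielding fact you isolate ($I[\Past_0;\Future_0|\AlternateState_0]=0$ for any generating hidden Markov presentation, stated in its Appendix on vanishing atoms) plus stationarity and the chain rule, massages it into $-\mathcal{J}_1(L)$ plus a remainder $\sum_j H[\MeasSymbol_j|\cdots]-L\hmu$; killing that remainder requires an external result (Ref.~[Trav10b]) that conditional block entropies approach $\hmu$ fast enough for the tail sums to vanish. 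You instead evaluate both sides in closed form as $I[\Past_0;\AlternateState_0]-\EE$: the two-way expansion of $I[\Past_0;\AlternateState_0,\Future_0]$ gives $\PC=I[\Past_0;\AlternateState_0]-\EE$, and the stationarity shift plus $\lim_L(\BE{L}-\hmu L)=\EE$ gives $-\mathcal{J}_1=I[\Past_0;\AlternateState_0]-\EE$. The two routes lean on essentially the same hypotheses (shielding, stationarity, finitarity --- your use of $\BE{L}-\hmu L\to\EE$ is equivalent in strength to the summability of $\hmu(L)-\hmu$ that the paper's remainder argument needs), but yours is more self-contained relative to what is already established in the text and yields the I-diagram reading of the theorem explicitly, namely that both $\PC$ and $-\mathcal{J}_1$ equal the atom $I[\Past_0;\AlternateState_0]-\EE$. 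What the paper's finite-$L$ chain buys in exchange is that it transfers verbatim to the oracular-information theorem, which the authors prove ``almost identically.'' Your two flagged technical points are the right ones, and both are unproblematic here: the shielding claim is exactly the conditional-independence property of the Mealy hidden Markov structure $(\MeasSymbol_i,\AlternateState_{i+1})=\phi(\AlternateState_i)$ and is assumed for all presentations in the paper, and the monotone bounded convergence of $I[\MeasSymbol_{-L}^L;\AlternateState_0]$ is immediate for finite-state presentations.
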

\begin{proof}
Starting with the length-$L$ approximation of the crypticity, we work our way
to the $L^\textrm{th}$ partial sum of $-\mathcal{J}_1$ via a straightforward
calculation:
\begin{align}
  & I[\MeasSymbol_{-L}^L; \AlternateState_0^{\vphantom{L}}| \MeasSymbol_0^L]
    \nonumber \\
  &\quad = H[\MeasSymbol_{-L}^L | \MeasSymbol_0^L]
    - H[\MeasSymbol_{-L}^L| \AlternateState_0^{\vphantom{L}},
                            \MeasSymbol_0^L] \\
  &\quad = H[\MeasSymbol_{-L}^L | \MeasSymbol_0^L]
    - H[\MeasSymbol_{-L}^L | \AlternateState_0^{\vphantom{L}}]
    \label{eq:pcproof-markov} \\
  &\quad = L \hmu - H[\MeasSymbol_{-L}^L | \AlternateState_0^{\vphantom{L}}]
    + H[\MeasSymbol_{-L}^L | \MeasSymbol_0^L] - L \hmu \\
  &\quad = -\mathcal{J}_1(L) +
    H[\MeasSymbol_{-L}^L | \MeasSymbol_0^L] - L \hmu
    \label{eq:pcproof-J1} \\
  &\quad = -\mathcal{J}_1(L) +
    H[\MeasSymbol_0^L | \MeasSymbol_{-L}^L] - L \hmu
    \label{eq:pcproof-stationarity1} \\
  &\quad = -\mathcal{J}_1(L) +
    \sum_{j=0}^{L-1} H[\MeasSymbol_j | \MeasSymbol_0^j, \MeasSymbol_{-L}^L]
    - L \hmu \label{eq:pcproof-chainrule} \\
  &\quad = -\mathcal{J}_1(L) +
   \sum_{j=L}^{2L-1} H[\MeasSymbol_j | \MeasSymbol_0^{L+j}] - L\hmu
    \label{eq:pcproof-stationarity2} ~.
\end{align}
Equation~\eqref{eq:pcproof-markov} follows because the states (in any hidden
Markov model) shield the past from the future: the future is a function of the
state. Equation~\eqref{eq:pcproof-J1} follows from the
definition of $\mathcal{J}_1$, and Eq.~\eqref{eq:pcproof-stationarity1}
from stationarity. Equation~\eqref{eq:pcproof-chainrule} follows
from the chain rule for block entropies~\cite{Cove06a}, and
Eq.~\eqref{eq:pcproof-stationarity2} from using stationarity again.

Finally, we take the large-$L$ limit. By definition,
we have $\mathcal{J}_1(L) \rightarrow \mathcal{J}_1$.  The remaining difference
converges to zero due to a result in Ref.~\cite{Trav10b} that the conditional
block entropies converge to the entropy rate faster than linearly in $L$.
\end{proof}

\vspace{-0.2in} 
\subsection{Oracular Information}

We now introduce a sibling of crypticity---the \emph{oracular
information}.

\begin{Def}
The \emph{oracular information} is the amount of state information
shared with the future that is not derived from the past:
\begin{align}
\OI \equiv I[\AlternateState_0; \Future_0 | \Past_0] ~.
\end{align}
\end{Def}

This new quantity is always zero for the \eM\ and nonzero only for
nonunifilar presentations. It is the difference between presentation statistical
complexity and the $y$-intercept of the state-block entropy curve,
Eq.~\eqref{eq:sbeasymptote}.

\begin{The}
The oracular information is the difference between the
presentation state entropy $H[\AlternateState_0]$ and the
sublinear part of the state-block entropy curve:
\begin{align}
\OI = -\mathcal{K}_1 ~.
\end{align}
\end{The}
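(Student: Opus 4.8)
The plan is to mirror the crypticity proof, exploiting the past--future duality: where that argument tracked the block-state integral $\mathcal{J}_1$ anchored at $\AlternateState_L$, here I track the state-block integral $\mathcal{K}_1$ anchored at $\AlternateState_0$. First I would write the length-$L$ approximation to the oracular information, $I[\AlternateState_0; \MeasSymbol_0^L | \MeasSymbol_{-L}^L]$, in which $\MeasSymbol_{-L}^L$ stands in for the past $\Past_0$ and $\MeasSymbol_0^L$ for the future $\Future_0$, and expand it as a difference of conditional block entropies, $H[\MeasSymbol_0^L | \MeasSymbol_{-L}^L] - H[\MeasSymbol_0^L | \AlternateState_0, \MeasSymbol_{-L}^L]$.

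The key simplification comes from the shielding property, Eq.~\eqref{shield}: because the presentation state $\AlternateState_0$ renders past and future conditionally independent, and $\MeasSymbol_0^L$ is a function of the future while $\MeasSymbol_{-L}^L$ is a function of the past, the second term collapses to $H[\MeasSymbol_0^L | \AlternateState_0]$. This is the exact dual of Eq.~\eqref{eq:pcproof-markov} and is the one nonmechanical step. Using the partial-sum form of the state-block integral, $\mathcal{K}_1(L) = H[\MeasSymbol_0^L | \AlternateState_0] - L\hmu$, I then collect terms to obtain
\begin{align}
I[\AlternateState_0; \MeasSymbol_0^L | \MeasSymbol_{-L}^L]
  = -\mathcal{K}_1(L)
    + \Bigl( H[\MeasSymbol_0^L | \MeasSymbol_{-L}^L] - L\hmu \Bigr) ~.
\end{align}

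Finally I would take $L \rightarrow \infty$. The first term tends to $-\mathcal{K}_1$ by definition, and the parenthesized remainder is exactly the quantity already shown to vanish at the close of the crypticity proof---via stationarity and the faster-than-linear convergence of the conditional block entropies to $\hmu$ established in Ref.~\cite{Trav10b}---so it can be cited rather than re-derived. That yields $\OI = -\mathcal{K}_1$.

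I expect the only real obstacle to be care around the shielding step: $\AlternateState_0 = \eta(\Past_0)$ is a function of the \emph{full} past, whereas the conditioning variable $\MeasSymbol_{-L}^L$ is only a finite past block, so the conditional independence of $\MeasSymbol_0^L$ and $\MeasSymbol_{-L}^L$ given $\AlternateState_0$ must be drawn from Eq.~\eqref{shield} itself and not from any Markov property of the finite block. It is worth noting that the oracular argument is slightly cleaner than its crypticity sibling: because $\mathcal{K}_1$ is already anchored at index $0$, no stationarity shift from $\AlternateState_L$ to $\AlternateState_0$ is required, and the leftover convergence is inherited verbatim.
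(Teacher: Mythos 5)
Your proposal is correct and is essentially the paper's own argument: the paper proves this theorem by stating that it ``proceeds almost identically'' to the crypticity case and writing down exactly the intermediate identity you derive, $I[\AlternateState_0; \MeasSymbol_0^L | \MeasSymbol_{-L}^L] = -\mathcal{K}_1(L) + \bigl(H[\MeasSymbol_0^L | \MeasSymbol_{-L}^L] - L\hmu\bigr)$, before taking the large-$L$ limit with the same convergence result from Ref.~\cite{Trav10b}. The only cosmetic difference is that you cite the shielding property via Eq.~\eqref{shield} (stated for causal states), whereas the paper's justification for general presentation states is that any hidden Markov model's states render past and future conditionally independent---but your argument and your observation that no stationarity shift is needed here are both exactly right.
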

\begin{proof}
The proof proceeds almost identically to the corresponding result for
crypticity. Namely,
\begin{align}
I[\AlternateState_0^{\vphantom{L}}; \MeasSymbol_0^L | \MeasSymbol_{-L}^L]
  = & -\mathcal{K}_1(L) \nonumber \\
  & + \sum_{j=L}^{2L-1} H[\MeasSymbol_j | \MeasSymbol_{0}^{j}] - L\hmu ~.
\end{align}
Then, taking the large-$L$ limit proves the result.
\end{proof}
In this sense, a positive oracular information indicates that there is a
deficit in using only the rival states for prediction. More information---the
oracular information---must be extracted from the presentation in order to
perform optimal prediction.

\vspace{-0.2in} 
\subsection{Gauge Information}

When moving away from the optimal representation afforded by a process's \eM,
it is possible to encounter presentations containing state information that is
not justified by a process's bi-infinite set of observables. We call this
\emph{gauge information} to draw a parallel with the descriptional degrees of
freedom that gauge theory addresses in physical systems~\cite{Fram08a}.

\begin{Def}
The \emph{gauge information} is the uncertainty in the presentation states
given the entire past and future:
\begin{align}
 \GI \equiv H[\AlternateState_0 | \Past_0, \Future_0] ~.
\end{align}
\end{Def}

That is, to the extent there is uncertainty in the states, even after the past
and the future are known, the presentation contains state uncertainty above and
beyond the process. Thus, there are components of the model that are not
determined by the process; rather they are the result of a choice of
presentation.

Intuitively, gauge information can be related to the total state entropy,
crypticity, oracular information, and excess entropy.  Later, we will
discuss information diagrams as a useful visualization tool, but for now,
we simply point out that one can ``visually'' verify the following theorem
from Figure~\ref{fig:NUnif}.

\begin{The}
Gauge information is the difference between the state entropy and the sum
of the crypticity, oracular information, and excess entropy:
\begin{equation}
  \GI = H[\AlternateState] - \left( \PC + \OI + \EE \right) ~.
\end{equation}
\end{The}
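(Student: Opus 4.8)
The plan is to recognize the claimed identity as a statement about the three-variable signed information measure carried by the triple $(\Past_0, \AlternateState_0, \Future_0)$, and to read it off from the atomic decomposition of the state entropy $H[\AlternateState_0]$. Applying the chain rule twice---first splitting $H[\AlternateState_0] = H[\AlternateState_0 | \Past_0] + I[\AlternateState_0; \Past_0]$, then expanding each summand by conditioning on (respectively removing) $\Future_0$---I would write the state entropy as the sum of the four atoms of its information diagram:
\begin{align}
H[\AlternateState_0] &= H[\AlternateState_0 | \Past_0, \Future_0]
  + I[\Past_0; \AlternateState_0 | \Future_0] \nonumber \\
  &\quad + I[\AlternateState_0; \Future_0 | \Past_0]
  + I[\Past_0; \AlternateState_0; \Future_0] ~. \nonumber
\end{align}
By their respective definitions, the first three atoms are exactly $\GI$, $\PC$, and $\OI$, using the symmetry $I[\Past_0; \AlternateState_0 | \Future_0] = I[\AlternateState_0; \Past_0 | \Future_0]$.

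It then remains only to identify the central atom---the co-information $I[\Past_0; \AlternateState_0; \Future_0]$---with the excess entropy. I would write it in its standard two-term form
\begin{equation}
I[\Past_0; \AlternateState_0; \Future_0]
  = I[\Past_0; \Future_0] - I[\Past_0; \Future_0 | \AlternateState_0] ~, \nonumber
\end{equation}
recognize the first term as $\EE = I[\Past_0; \Future_0]$ (the subscript $0$ being merely a choice of time origin in the stationary process), and argue that the second term vanishes. That vanishing is precisely the shielding property: in any hidden Markov presentation the current state renders past and future conditionally independent, so $I[\Past_0; \Future_0 | \AlternateState_0] = 0$. This is exactly the conditional independence already invoked at Eq.~\eqref{eq:pcproof-markov} in the crypticity proof, so I would cite it rather than re-derive it. Substituting $I[\Past_0; \AlternateState_0; \Future_0] = \EE$ into the decomposition yields $H[\AlternateState_0] = \GI + \PC + \OI + \EE$, and rearranging gives the claim.

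The only genuinely substantive step is this central-atom identification, and its entire weight rests on the shielding relation $I[\Past_0; \Future_0 | \AlternateState_0] = 0$; everything else is bookkeeping with the chain rule. I would therefore take care to state explicitly that the conditional independence holds for a \emph{general} presentation and not merely the \eM, since it is what makes the three atoms $\PC$, $\OI$, and the $\EE$-valued co-information fit together with no residual term. As a consistency check, this atomic reading is precisely the one displayed in Fig.~\ref{fig:NUnif}, so the algebraic argument and the information-diagram picture should agree by construction.
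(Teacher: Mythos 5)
Your proof is correct and is essentially the paper's own argument: both decompose $H[\AlternateState]$ into the four atoms of the $(\Past, \AlternateState, \Future)$ information diagram, identify three of them with $\GI$, $\PC$, and $\OI$, and use the shielding property $I[\Past;\Future|\AlternateState]=0$ to equate the co-information atom with $\EE$. The only cosmetic difference is that the paper works with finite-length approximations $\PC(L)+\OI(L)+\EE(L)$ and takes the limit at the end, whereas you argue directly with the infinite past and future.
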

\begin{proof}
Since we are working with hidden Markov models, the future and past are
conditionally independent given the current state. Thus,
$\EE \equiv I[\Past; \Future] = I[\Past; \AlternateState; \Future]$.
Now, the proof proceeds as a simple verification:
\begin{align*}
\PC(L) + \OI(L) + \EE(L)
  &= I[\PastL; \AlternateState | \FutureL] \\
  &\quad+ I[\AlternateState; \FutureL | \PastL] \\
  &\quad+ I[\PastL; \AlternateState; \FutureL] \\
  &= H[\AlternateState] - H[\AlternateState | \PastL, \FutureL] ~.
\end{align*}
So, finite-length approximations to the gauge information can be written as:
\begin{align*}
H[\AlternateState] - \bigl( \PC(L) + \OI(L) + \EE(L) \bigr)
	= H[\AlternateState | \PastL, \FutureL] ~.
\end{align*}
Taking the limit, we achieve our desired result.
\end{proof}

\vspace{-0.2in} 
\subsection{Synchronization Information}

As we noted, it is always possible to asymptotically synchronize to an \eM\ with
a finite number of recurrent causal states. For some processes, synchronization
can happen in finite time. While in others, it can only happen in the limit as
the observation window tends to infinity. In either case, it is always true
that $H[\CausalState_\infty | \Future] = 0$.

When we generalize to presentations that differ from \eMs, it is no longer true
that one always synchronizes to the presentation states. In such cases, there
is irreducible state uncertainty, even after observing an infinite number of
symbols. This kind of state uncertainty cannot be reduced by past observations alone.
Due to this, the synchronization information, as previously defined, diverges.

\begin{Def}
The \emph{presentation synchronization information} is the total uncertainty
in the presentation states:
\begin{align}
\SI \equiv \sum_{L=0}^\infty H[\AlternateState_L^{\vphantom{L}} | \MeasSymbol_0^L] ~.
\end{align}
\end{Def}
We will show in Sec.~\ref{synchronizationorder} that this can be understood in
terms of the gauge and oracular informations.

\vspace{-0.2in} 
\subsection{Cryptic Order}

The cryptic order was defined in Ref.~\cite{Maho09a} as the minimum length
$k$ for which $H[\CausalState_k | \Future_0] = 0$. Reference~\cite{Jame10a}
shows that the cryptic order is a topological property of the
\emph{irreducible sofic shift} \cite{Lind95a} describing the support of the \eM.
However, we can understand the cryptic order geometrically as the length
$\COrder$ at which the block-state entropy $\BSE{L}$ reaches its asymptote; see
Eq.~\eqref{eq:sbeasymptote}. It turns out that this concept generalizes
directly to generic presentations.
\begin{Def}
The \emph{presentation cryptic order} is the length $k$ at which the
block-state entropy curve reaches its asymptote:
\begin{align}
 \COrder &\equiv \min \left\{ L : \AltBSE{L}
          = H[\AlternateState_0] - \PC + \hmu L \right\} ~.
\end{align}
\end{Def}
One would like to understand the cryptic order in terms of an explicit limit,
as done for \eMs, where cryptic order is the minimum $k$ for which
$H[\CausalState_k | \Future_0] = 0$. The obvious complication for presentations,
in general, is that one might never synchronize to a particular state. However,
it turns out that one can understand the presentation cryptic order in terms
of one's uncertainty in the distribution over
\emph{distributions of states}---that is, the uncertainty in the distribution
over \emph{mixed states} \cite{Uppe97a,Crut08b}.
Specifically, we frame the generalized cryptic order in terms of synchronizing
to distributions over presentation states. We outline the approach briefly;
a detailed exposition will appear elsewhere~\cite{Jame10a}.

As measurements are made, an observer's uncertainty in the state of the
presentation varies. However, the pattern of variation becomes regular
as more observations are made. The cryptic order, then, is understood as the
number of distributions over presentation states that one cannot know with
certainty from time $t=0$ given the entire future.  Said differently, the
cryptic order is the time at which an observer becomes absolutely
certain about the uncertainty in the presentation states.

\vspace{-0.2in} 
\subsection{Oracular Order}

The oracular order definition parallels those of the cryptic and the Markov
orders.

\begin{Def}
The \emph{oracular order} is the length $\OOrder$ at which the state-block
entropy curve reaches its asymptote:
\begin{align}
 \OOrder &\equiv \min \left\{ L :
      \AltSBE{L} = H[\AlternateState_0] - \OI + \hmu L \right\} ~.
\end{align}
\end{Def}

It always vanishes for \eMs. So, this new length scale is a property of the
presentation only and not of the process generated by the presentation.

\vspace{-0.2in} 
\subsection{Gauge Order}

The gauge order definition also parallels those of the cryptic, Markov, and
oracular orders.

\begin{Def}
The \emph{gauge order} is the length $\GOrder$ at which
$H[\AlternateState_0 | \MeasSymbol_{-L}^L \MeasSymbol_0^L]$ reaches its
asymptote.
\begin{align}
\GOrder \equiv \min \{ L :
	H[\AlternateState_0 | \MeasSymbol_{-L}^L, \MeasSymbol_0^L]  = \GI \} ~.
\end{align}
\end{Def}
Geometrically, we visualize the gauge order as the length at which the
difference between two
curves---$H[\MeasSymbol_{-L}^L, \AlternateState_0 , \MeasSymbol_0^L]$ and
$H[\MeasSymbol_{-L}^L, \MeasSymbol_0^L]$---becomes fixed to their
asymptotic difference.
\begin{The}
The \emph{gauge order} is the maximum of the Markov, cryptic, and oracular orders:
\begin{align}
\GOrder = \max \{ \MOrder, \COrder, \OOrder \} ~.
\end{align}
\end{The}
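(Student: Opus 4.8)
The plan is to follow the information-diagram strategy of the preceding gauge-information theorem, but now to track the \emph{length} at which each finite-window quantity settles onto its linear asymptote rather than merely its limiting value. I would work with the finite-window gauge curve
\[
  \GI(L) \equiv H[\AlternateState_0 \mid \MeasSymbol_{-L}^L, \MeasSymbol_0^L]
         = H[\AlternateState_0 \mid \MeasSymbol_{-L}^{2L}] ~.
\]
Because conditioning on the longer contiguous block $\MeasSymbol_{-L}^{2L}$ can only lower the entropy, $\GI(L)$ is non-increasing in $L$ and decreases monotonically to $\GI = H[\AlternateState_0 \mid \Past_0, \Future_0]$. Hence the gauge order is exactly the first $L$ at which this monotone sequence reaches its limit, and once reached it is held for all larger $L$. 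This reduces the theorem to locating that first hitting time.

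To expose the hitting time I would use causal shielding to split the conditioned block entropy, $H[\MeasSymbol_{-L}^{2L} \mid \AlternateState_0] = H[\MeasSymbol_{-L}^L \mid \AlternateState_0] + H[\MeasSymbol_0^L \mid \AlternateState_0]$, and identify the two pieces, via stationarity and the $\mathcal{J}_1$/$\mathcal{K}_1$ partial-sum identities already derived, with $L\hmu + \mathcal{J}_1(L)$ and $L\hmu + \mathcal{K}_1(L)$ respectively. Together with $H[\MeasSymbol_{-L}^{2L}] = H(2L)$ this gives
\[
  \GI(L) = H[\AlternateState_0] + \mathcal{J}_1(L) + \mathcal{K}_1(L)
           - \bigl( H(2L) - 2L\hmu \bigr) ~.
\]
Subtracting the limit writes $\GI(L) - \GI$ as a sum of three separately nonnegative, monotonically vanishing gap terms: the block-state gap $\mathcal{J}_1(L) - \mathcal{J}_1$, the state-block gap $\mathcal{K}_1(L) - \mathcal{K}_1$, and the block-entropy gap $\EE - (H(2L) - 2L\hmu)$.

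Each gap is the vertical distance between one constituent entropy curve and its asymptote, so each vanishes precisely at the corresponding order: the block-state gap at the cryptic order $\COrder$, the state-block gap at the oracular order $\OOrder$, and the block-entropy gap at the length set by the Markov order $\MOrder$. Since the three terms are individually nonnegative, their sum is zero if and only if each is zero, so no early cancellation is possible and the first simultaneous zero is at $\max\{\MOrder, \COrder, \OOrder\}$. This also matches the picture of Fig.~\ref{fig:NUnif}: $\GI(L)$ is the gap between the curves $H[\MeasSymbol_{-L}^L, \AlternateState_0, \MeasSymbol_0^L]$ and $H[\MeasSymbol_{-L}^L, \MeasSymbol_0^L]$, which freezes exactly once both have reached their parallel asymptotes.

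The step I expect to be the main obstacle is the bookkeeping that ties the block-entropy gap to the Markov order. The pure-observable contribution enters through $H(2L)$, i.e.\ through the length-$2L$ window rather than the length-$L$ half-window, so one must argue carefully that its gap closes at the stated Markov order. This is also the reason I would route the whole argument through the monotone quantities $\mathcal{J}_1(L)$, $\mathcal{K}_1(L)$, and $H(2L)$ rather than through the information-diagram atoms $\PC(L)$, $\OI(L)$, $\EE(L)$ directly: those atoms are \emph{not} individually monotone, since $\PC(L)$ and $\OI(L)$ each carry the common correction $\sum_{m=L+1}^{2L} (\hmu(m) - \hmu)$ that vanishes only at the Markov order, and their deviations can cancel against that of $\EE(L)$. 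Making the no-cancellation claim airtight for the monotone gap terms, and confirming that the block-entropy gap is governed by $\MOrder$ as stated, is the crux.
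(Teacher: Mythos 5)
Your decomposition of $\GI(L)-\GI$ into the three individually nonnegative, monotone gaps $\bigl(\mathcal{J}_1(L)-\mathcal{J}_1\bigr)$, $\bigl(\mathcal{K}_1(L)-\mathcal{K}_1\bigr)$, and $\EE-\bigl(H(2L)-2L\hmu\bigr)$ is correct, and it is a genuinely different route from the paper's: the paper argues from the limiting identity $\GI = H[\AlternateState_0]-\EE-\PC-\OI$ and asserts that once each of the Markov, cryptic, and oracular orders is reached the remaining state information equals $\GI$. That argument only yields the upper bound $\GOrder \le \max\{\MOrder,\COrder,\OOrder\}$ and, because the finite-$L$ atoms $\EE(L)$, $\PC(L)$, $\OI(L)$ are not individually monotone, it does not rule out earlier convergence by cancellation. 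Your no-cancellation decomposition is the right instrument for pinning down the exact hitting time.

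But the step you flag as the crux is a genuine gap, and it cannot be closed in the direction you want. The pure-observable contribution enters only through $H(2L)-2L\hmu$, and since $H(m)-m\hmu=\EE$ exactly for all $m\ge \MOrder$, that third gap vanishes as soon as $2L\ge\MOrder$, i.e.\ at $L=\lceil \MOrder/2\rceil$, not at $L=\MOrder$. Carried to completion, your decomposition therefore gives $\GOrder=\max\{\lceil\MOrder/2\rceil,\COrder,\OOrder\}$, which proves only $\GOrder\le\max\{\MOrder,\COrder,\OOrder\}$ and indicates that the stated equality can fail. A concrete check: for the \eM\ of the period-$3$ process that repeats $001$, one has $\MOrder=2$ and $\COrder=\OOrder=0$, yet any two consecutive symbols $\MeasSymbol_{-1}\MeasSymbol_{0}$ already determine the phase, so $H[\CausalState_0\,|\,\MeasSymbol_{-1},\MeasSymbol_0]=0=\GI$ and hence $\GOrder=1$, not $2$. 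The discrepancy is forced by the fact that the conditioning window in the gauge-order definition has total length $2L$; either the theorem's right-hand side or that window must be adjusted before a proof along your lines (or the paper's) can go through.
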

\begin{proof}
The gauge information can be understood as the left-over state information
after the excess entropy, crypticity, and oracular information~\footnote{
Oracular information cannot be extracted from the past observables. This point
will be discussed further in Sec.~\ref{sec:classifying}.} have been extracted:
\begin{align}
\GI = H[\AlternateState_0] - \EE - \PC - \OI ~.
\end{align}
Thus, as soon as the observer reaches each of the Markov, cryptic, and oracular
orders, the remaining state information exactly equals the gauge information.
\end{proof}

It is important to note that, unlike the Markov, cryptic, and oracular orders,
the gauge order does \emph{not} indicate a scale at which an amount of
information is contained.  Rather, it is more the opposite. The gauge order is
the length scale beyond which there is no point attempting to extract any more
state information (even with an oracle), precisely because this remainder is
the gauge information and, therefore, not correlated with the process language.
It corresponds to what in physics one calls a gauge freedom.

\vspace{-0.2in}
\subsection{Synchronization Order}
\label{synchronizationorder}

As mentioned in Sec.~\ref{synceM}, the length at which an observer has
synchronized to an \eM\ is always $\MOrder$, the Markov order.  Recall, any
\order{R} Markov process has
$I[\Future_\MOrder ; \Past_0 | \MeasSymbol_0^\MOrder] = 0$.
Synchronization to the \eM\ requires that
$H[\CausalState_L | \MeasSymbol_0^L] = 0$, and it is
straightforward to see that this holds for $L=R$. As we generalize to
non-\eM\ presentations, though, we must look beyond Markov order to
address the fact that one might only synchronize to distributions
over presentation states.
\begin{Def}
The \emph{presentation synchronization order} is the length $\SOrder$ at which
$H[\AlternateState_L | \MeasSymbol_0^L]$ reaches its asymptote:
\begin{align}
  \SOrder \equiv \min \{ L : H[\AlternateState_L |
  \MeasSymbol_0^L] = \GI + \OI \} ~.
\end{align}
\end{Def}
The motivation for this definition is that the asymptote is simply the
difference of the asymptotes for the block-state and block entropy curves. That
is, the synchronization order is also thought of as the length at which
the state uncertainty equals its irreducible state uncertainty:
$\GI + \OI = H[\AlternateState_0 | \Past_0]$.

Now, we show that the synchronization order must occur at either
the presentation cryptic order or the Markov order.
\begin{The}
The presentation synchronization order is the maximum of the Markov and
presentation cryptic orders:
\begin{align}
\SOrder = \max \{\MOrder, \COrder \} ~.
\end{align}
\end{The}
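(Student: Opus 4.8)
The plan is to recognize the synchronization-order condition as a statement about where two entropy curves simultaneously sit on their linear asymptotes. First I would rewrite the defining quantity with the chain rule,
\begin{align}
H[\AlternateState_L | \MeasSymbol_0^L] = \AltBSE{L} - \BE{L} ~,
\end{align}
so that the conditional state uncertainty is exactly the vertical gap between the block-state and block entropy curves. The block entropy approaches the asymptote $\EE + \hmu L$ and meets it at the Markov order $\MOrder$, while the block-state entropy approaches $H[\AlternateState_0] - \PC + \hmu L$ (its $Y_\text{BSE}$ asymptote, using $\PC = -\mathcal{J}_1$) and meets it at the cryptic order $\COrder$. Subtracting the asymptotes and invoking the gauge-information identity $\GI = H[\AlternateState_0] - \PC - \OI - \EE$ shows the gap limits to $(H[\AlternateState_0] - \PC) - \EE = \GI + \OI$, confirming that the asymptote of $H[\AlternateState_L | \MeasSymbol_0^L]$ named in the definition is indeed $\GI + \OI$.

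The key step is a sign argument that forbids cancellation between the two deviations. I would define the block-state deviation $f(L) \equiv \AltBSE{L} - (H[\AlternateState_0] - \PC + \hmu L)$ and the block deviation $g(L) \equiv \BE{L} - (\EE + \hmu L)$, so that
\begin{align}
H[\AlternateState_L | \MeasSymbol_0^L] - (\GI + \OI) = f(L) - g(L) ~.
\end{align}
From Theorem~\ref{bseconvergence}, $\Delta \AltBSE{L} \leq \hmu$, so $\AltBSE{L} - \hmu L$ is nonincreasing and $f(L) \geq 0$: the block-state curve approaches its asymptote from above. Dually, $\Delta \BE{L} = \hmu(L) \geq \hmu$, so $\BE{L} - \hmu L$ is nondecreasing and $g(L) \leq 0$: the block curve approaches its asymptote from below. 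Because each deviation is monotone and bounded by its limit, once it reaches zero it stays there; hence $f(L) = 0$ precisely for $L \geq \COrder$ and $g(L) = 0$ precisely for $L \geq \MOrder$.

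With $f(L) \geq 0 \geq g(L)$, the difference $f(L) - g(L)$ is nonnegative and vanishes if and only if $f(L) = 0$ and $g(L) = 0$ simultaneously. The smallest $L$ at which this holds is $\max\{\MOrder, \COrder\}$, which is therefore the first length at which $H[\AlternateState_L | \MeasSymbol_0^L]$ attains $\GI + \OI$, giving $\SOrder = \max\{\MOrder, \COrder\}$ by definition of the synchronization order. The main obstacle is precisely the no-cancellation claim: everything hinges on the two curves approaching their asymptotes from opposite sides and on reaching the asymptote being permanent, both of which I would derive from the monotonicity of $\Delta \AltBSE{L}$ and $\Delta \BE{L}$ supplied by the convergence theorems. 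I would also sanity-check the degenerate case (e.g. $\COrder = 0$ for the Even Process), where one deviation is identically zero and $\SOrder$ correctly collapses onto the remaining order.
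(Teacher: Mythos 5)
Your proposal is correct and follows the same geometric picture as the paper's own proof: the synchronization order is the length at which the gap $H[\AlternateState_L | \MeasSymbol_0^L] = \AltBSE{L} - \BE{L}$ between the block-state and block entropy curves settles to its asymptotic value $\GI + \OI$, which happens once both curves have individually reached their asymptotes at $\COrder$ and $\MOrder$ respectively. Your version is in fact more complete than the paper's one-line argument, which only checks that the uncertainty equals $\GI + \OI$ at $L = \max\{\MOrder, \COrder\}$; your sign-and-monotonicity step ($f(L) \ge 0 \ge g(L)$, each deviation monotone and permanent once zero, so the gap cannot attain $\GI + \OI$ earlier by cancellation) is exactly what is needed to establish that this length is \emph{minimal}, as the definition of $\SOrder$ demands.
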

\begin{proof}
When both the block-state and block entropy curves have reached their asymptotes
the observer will have extracted $\EE+\PC$ bits of state information. This
leaves $H[\AlternateState_0] - \EE - \PC = \GI + \OI$ bits. This is exactly
the irreducible state uncertainty---that which cannot be learned from
the observables.
\end{proof}
Note that for \eMs: $\EE + \PC = \Cmu$. So, when an observer has extracted all
that can be learned about the process from the past observables, the observer
has learned everything about the causal states.

When the synchronization order is finite,
$H[\AlternateState_L | \MeasSymbol_0^L]$
is fixed at the presentation's irreducible state uncertainty for all
$L>\SOrder$. Then, it can be helpful to view the presentation
synchronization information as consisting of two contributions:
\begin{align}
 \SI = \sum_{L=0}^{\SOrder-1} H[\AlternateState_L | \MeasSymbol_0^L]
+ \sum_{L=\SOrder}^\infty (\GI + \OI) ~.
\end{align}
When the synchronization order is not finite, it can be useful to interpret
the synchronization information in a slightly different manner:
\begin{align}
 \SI = \mathcal{I}_0 + \mathcal{J}_0 + \lim_{L \rightarrow \infty} (\GI + \OI)L ~.
\label{eq:SyncInfoWhenInfinite}
\end{align}

\vspace{-0.2in}
\subsection{Synchronization Time}

Reference~\cite{Feld02a} defined the \emph{synchronization time} $\tau$ of a
periodic process to be the average time needed to synchronize to the states.
Let $w=w_0\cdots w_{p-1}$ be a cyclic permutation of the word that is repeated
by a periodic process having period $p$. It follows that
\begin{align}
	\Pr(\MeasSymbol_0^p = w) = \frac{1}{p} ~,
\end{align}
since any cyclic permutation is just as likely as another.  Now, while each
permutation has the same probability, it is not true that each permutation
is equally informative in terms of synchronization.  For example, consider
the process that repeats the word $00011$, indefinitely. If an observer
saw $01$, then the observer would be synchronized. In contrast, the observer
would not be synchronized if $00$ had been observed instead.
Reference~\cite{Feld02a} defined $\tau_w$ as the synchronization time of the
cyclic permutations of $w$. Then,
\begin{align}
	\tau = \sum_{w} \tau_w \Prob(\MeasSymbol_0^p=w) ~.
\end{align}
Since $\hmu = 0$ for all periodic processes,
\begin{align}
\Pr(\MeasSymbol_0^p=w) = \Prob(\MeasSymbol_0^{\tau_w}=w_0 \cdots w_{\tau_{w}-1}) ~.
\end{align}
Thus, we can rewrite $\tau$ suggestively as:
\begin{align}
	\tau = \sum_{w} \tau_w  \Prob(\MeasSymbol_0^{\tau_w}=w_0 \cdots w_{\tau_{w}-1}) ~.
\end{align}
Then, instead of summing over all cyclic permutations of $w$, we can just sum
over the set $\mathcal{L}_\text{sync}$ of all minimal synchronizing words. (A
word is a \emph{minimal synchronizing word} if no prefix of the word is also
synchronizing.) Now, we can extend $\tau$ to all finitary processes, not just
periodic ones.
\begin{Def}
The \emph{process synchronization time} is the average time required
to synchronize to the \eM's recurrent causal states:
\begin{align}
	\tau \equiv \sum_{w \in \mathcal{L}_\mathrm{sync}}
				|w| \Prob \left( \MeasSymbol_0^{|w|} = w \right) ~.
\end{align}
\end{Def}
Note that any \order{R} Markov process has $\tau \leq R$. The synchronization
time gives an intuition for how long it takes to synchronize to a stochastic
process.

As an example, recall the Even Process \cite{Crut01a}. It has the property
that there are arbitrarily long minimal synchronizing words. For example,
$1^k0$ is always a minimal synchronizing word, for any $k$. Despite this fact,
the synchronization time of the Even Process is $\tau = 10/3$. After repeatedly
observing sequences four symbols in length, on average an observer will be
synchronized to the states of the \eM.

When considering more general presentations it is not always the case that one
can synchronize to the states, as $\tau$ can be infinite. Just as with the
cryptic order, however, one can synchronize to distributions over the
presentation states.  This motivates the presentation synchronization time.
\begin{Def}
The \emph{presentation synchronization time} is the average time required
to synchronize to a recurrent distribution over presentation states.
\end{Def}
We provide an intuitive definition here, leaving a more detailed
discussion, where notation is properly developed, for a sequel.

\vspace{-0.2in}
\section{Classifying Presentations}
\label{sec:classifying}

The \eM\ is frequently the preferred presentation of a process, especially
when one is interested in understanding fundamental properties of the process
itself. However, one might be interested in the properties of particular
presentations of a process, and it would be helpful if there was an
analogous theory similar to that which has been developed for \eMs.

To develop this, we establish a classification of a process's presentations.
The classes are defined in terms of whether a presentation is nonunifilar,
unifilar, weakly asymptotically synchronizable, and minimal unifilar. The
result is shown in Fig.~\ref{fig:bullseye}, which shows that the presentation
classes form a nested hierarchy.

\begin{figure}
\centering
\includegraphics[scale=.95]{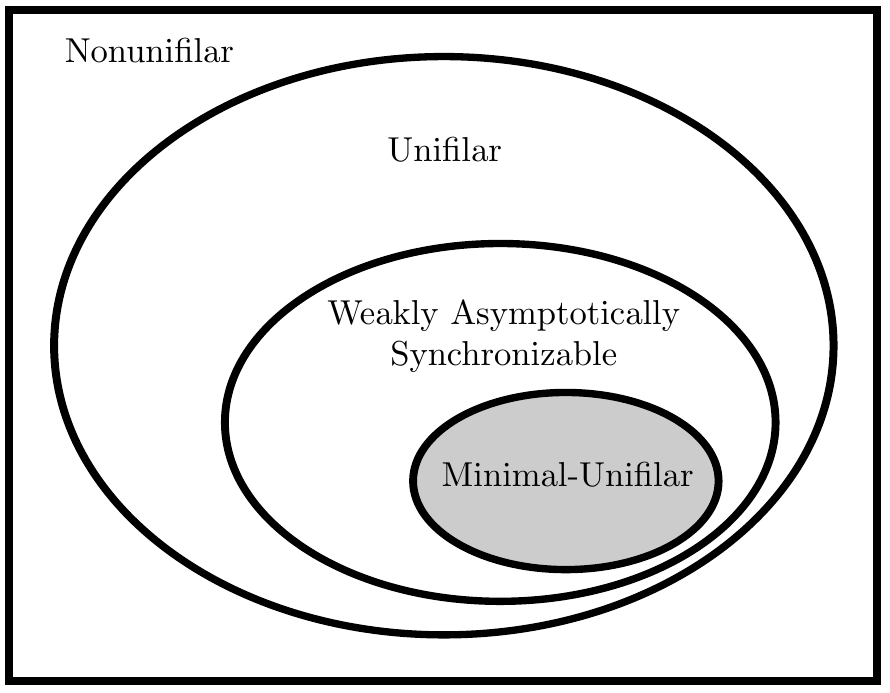}
\caption{The hierarchy of presentations of a finitary process. The gray region
  represents that portion to which the \eM\ belongs.
  }
\label{fig:bullseye}
\end{figure}

The most general type of presentation is nonunifilar, where we allow
the possibility that
$H[\AlternateState_1 | \AlternateState_0, \MeasSymbol_0] > 0$. Then, unifilar
presentations are the subset of nonunifilar presentations for which this
quantity is exactly zero. In the unifilar class, there can be redundant
states---states from which the future looks exactly the same and also states
which have the same exact histories mapping to them. When we move to weakly
asymptotically synchronizable presentations, all redundant states are removed
and the remaining states must induce a partition on the set of histories that
is a refinement of the causal state partition; cf. Ref. \cite[Lemma 7]{Shal98a}.
Finally, minimal unifilar presentations are the \eMs, whose partition of the
pasts is the coarsest one possible.

In this light, one might conclude that \eMs\ are an overly restricted set of
presentations. They are indeed a restricted set, but it is a restriction with
purpose: The \eM\ is the unique minimal prescient presentation within the set
of a process's presentations. Moreover, all of a process's properties can be
determined from its \eM. These facts allow one to purposefully conflate
properties of the \eM\ with process's properties.

We will use a \emph{information diagram} (I-diagram) \cite{Yeun91a} to analyze
what happens as one relaxes the defining properties of the \eM\ presentation's
random variables. With the \eM, we have the past $\Past$, the causal states
$\CausalState$, and the future $\Future$. As we move away from the \eM's causal
states, we must consider in addition the rival states $\AlternateState$.

\begin{figure*}
\centering
\ifcolor
    \includegraphics[scale=.65]{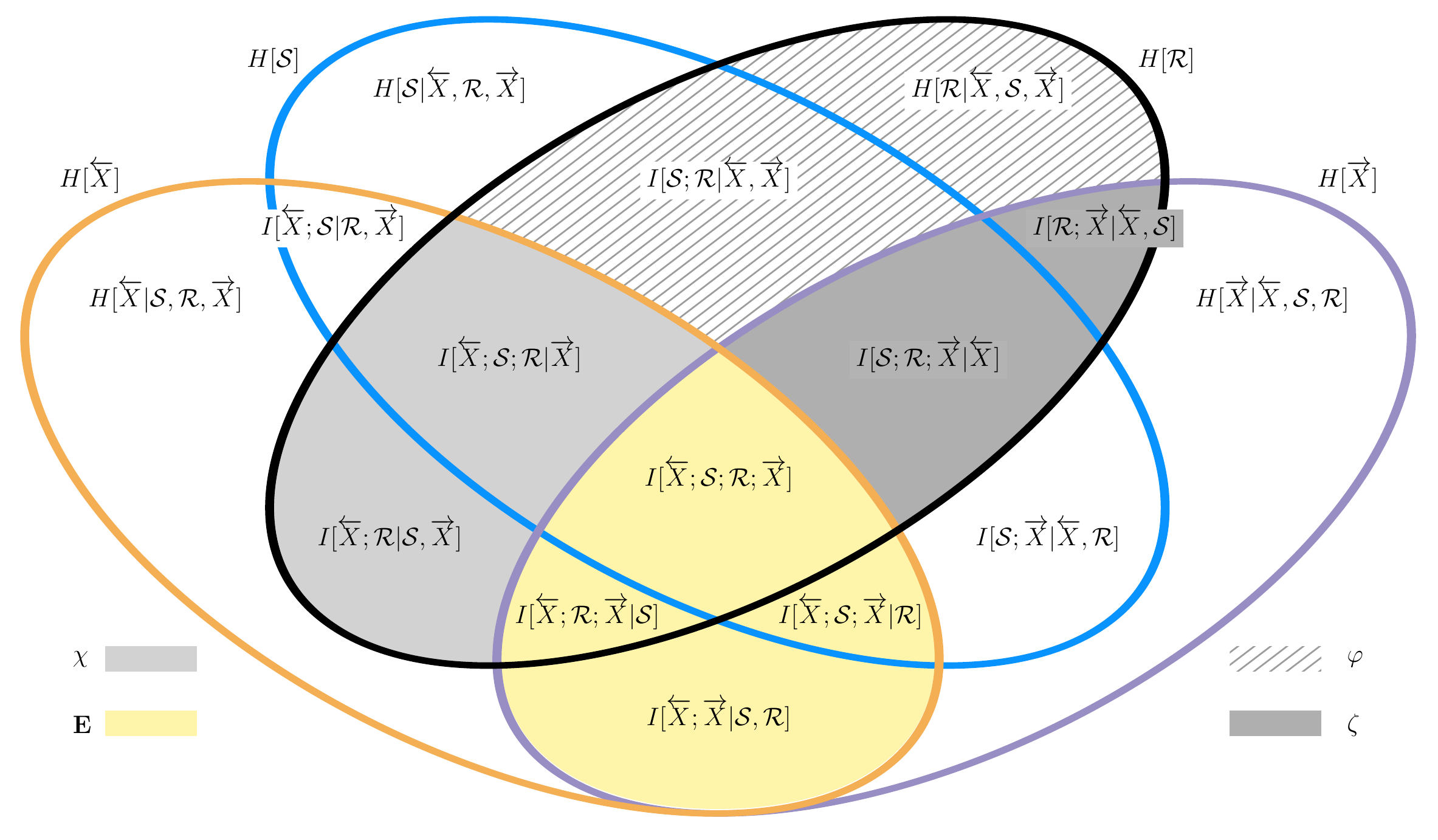}
\else
    \includegraphics[scale=.65]{images/4variable_gray}
\fi
\caption{The general four-variable information diagram involving
  $\protect\Past$, $\CausalState$, $\AlternateState$, and $\protect\Future$.
  The shaded light gray is the generalized crypticity $\PC$.
  The yellow is the excess entropy $\EE$.
  The dark gray is the oracular information $\OI$.
  The hatched area is the gauge information $\GI$.
  Note that this is only a schematic diagram of the interrelationships. In
  particular, potentially infinite quantities---such as, $H[\protect\Past]$
  and $H[\protect\Future]$---are depicted with finite areas.
  }
\label{fig:4variable}
\end{figure*}

In total, there are four random variables to consider. The full range of their
possible information-theoretic relationships appears in the information diagram
(I-diagram) of Fig.~\ref{fig:4variable}. However, Appendix~\ref{zeroatoms} shows
that 7 of the 15 atoms (elemental components of the multivariate information
measure sigma algebra) vanish. This allows us to simplify other atoms
dramatically. For example, the atom:
\begin{align}
&I[\Past; \CausalState; \AlternateState; \Future] \\
              &\quad = I[\Past; \CausalState; \Future]
                 - I[\Past; \CausalState; \Future | \AlternateState] \\
              &\quad = I[\Past; \CausalState; \Future] \\
              &\quad = I[\Past; \Future] - I[\Past; \Future | \CausalState] \\
              &\quad = I[\Past; \Future] -
                 (I[\Past; \AlternateState; \Future | \CausalState]
                 +I[\Past; \Future | \CausalState, \AlternateState]) \\
              &\quad = I[\Past; \Future] ~,
\end{align}
where we made use of the atoms that vanish. Thus, the four-way mutual
information simply reduces to the mutual information between the past
and the future---the excess entropy:
\begin{align}
I[\Past; \CausalState; \AlternateState; \Future] = \EE ~.
\end{align}
Similar calculations reduce the other information measures in
Fig.~\ref{fig:4variable} correspondingly. We now consider these reductions in
turn.

\subsection{Case: Minimal Unifilar Presentation}

The set of minimal unifilar presentations corresponds exactly to the $\eMs$,
up to state relabeling. The states in these presentations, the causal states,
induce a partition of the infinite pasts via the function $\epsilon(\past)$.

\begin{figure}
\centering
\ifcolor
    \includegraphics[width=\columnwidth]{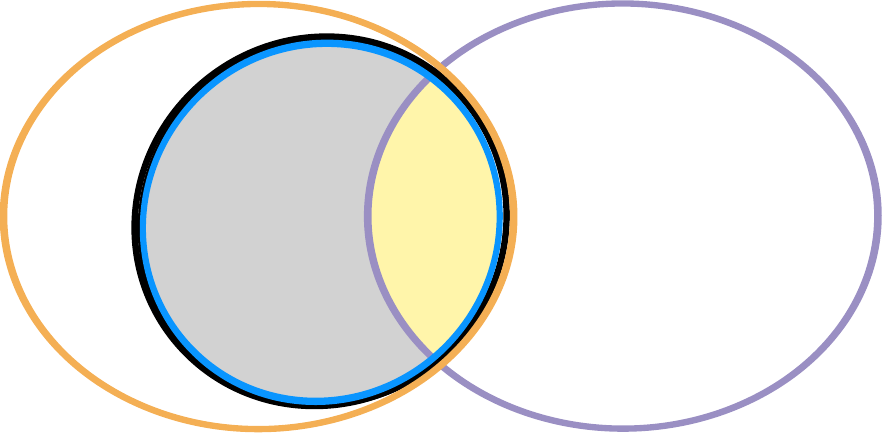}
    \includegraphics[scale=.95]{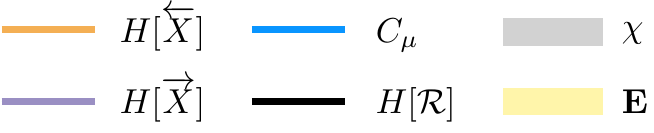}
\else
    \includegraphics[width=\columnwidth]{images/eM_gray}
    \includegraphics[scale=.95]{images/legend_small_gray}
\fi
\caption{The information diagram for an \eM. The states of the presentation
  are causal states and induce a partition on the past.
  The entropy over the states, $H[\AlternateState_0] = H[\CausalState_0]$,
  defines the statistical complexity ($\Cmu$). The process
  crypticity is the difference of the statistical complexity and the
  excess entropy.
  }
\label{fig:eM}
\end{figure}

The information diagram and entropy growth plot are particularly simple,
as seen in Fig.~\ref{fig:eM} and Fig.~\ref{fig:eMEntropyGrowth}. This simplicity
derives from the efficient predictive role the causal states play. Referring
to the I-diagram, $H[\CausalState | \Past] = 0$ because of determinism of the
$\epsilon(\past)$ map, Eq.~(\ref{Eq:PredictiveEquivalence}). Next, causal
states, as well as all other states we consider, are prescient states and so
$I[\Past ; \Future | \AlternateState] = 0$. These straightforward requirements
entirely determine the form of the \eM\ I-diagram in Fig.~\ref{fig:eM}. As we
step through the space of presentation classes, we will see these
relationships become more complex.

\begin{figure}
\centering
\ifcolor
    \includegraphics[width=\columnwidth]{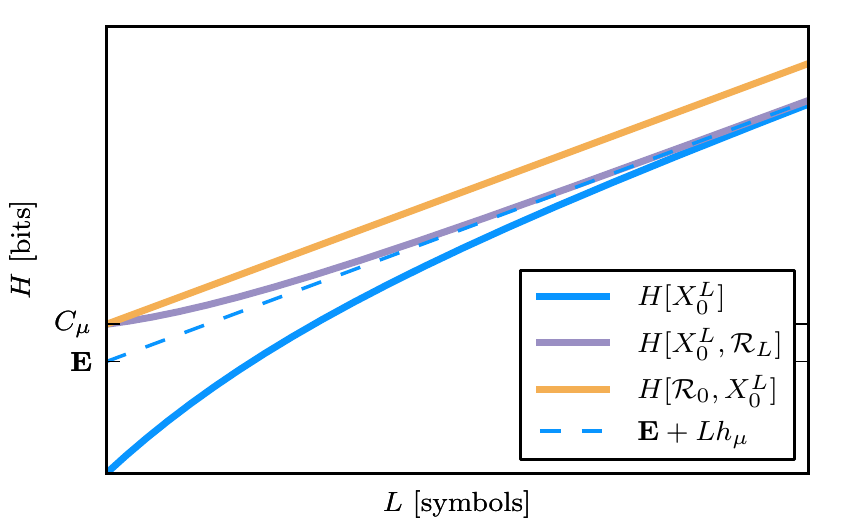}
\else
    \includegraphics[width=\columnwidth]{images/BEGem_gray}
\fi
\caption{Entropy growth for a generic \eM.
  \BE{L} and \BSE{L} both converge to the same asymptote.
  \SBE{L} is linear.
  }
\label{fig:eMEntropyGrowth}
\end{figure}

There are three quantities that require attention in this figure. First, the
state entropy $H[\AlternateState]$ is equal to $\Cmu$---the statistical
complexity. This particular state information is considered privileged as it
is the state information associated with the \eM\ and so the process.
The excess entropy $\EE$ is the mutual information
between the past and future and is also exactly that information which the
(causal) states contain about the future. Lastly, the crypticity $\PC$ is the
amount of information ``overhead'' required for prediction using the \eM.
Generally, this overhead is associated with the presentation as well as the
process \emph{itself}, due to the uniqueness of the \eM\ presentation. It
is the irreducible memory associated with the process. At any time, the process
itself or a predictive model must keep track of $\Cmu$ bits of state
information, while only $\EE$ bits of this information are correlated with
the future.

The entropy growth plot, Fig.~\ref{fig:eMEntropyGrowth}, is also simplified
by using causal states. In terms of our newly defined integrals:
$\mathcal{K}_n = 0$ for all $n$ and
$\mathcal{J}_1 = H[\CausalState] - \mathcal{I}_1 = \PC$.

A simple example that illustrates all of these points is provided by the Golden
Mean Process and its \eM; see Fig. \ref{fig:GMeM}. When the probability $p$ is
chosen to be $\frac{1}{2}$, the values of our information measures are
$\Cmu = \log_2 (3) - \frac{2}{3} = 0.9183$ bits, $\PC = \frac{2}{3}$ bits,
and $\EE = \Cmu - \PC = 0.2516$ bits. As we explore alternate presentations, we
will return to this process as a common thread for explanation and intuition.

\begin{figure}
\centering
\includegraphics[scale=.8]{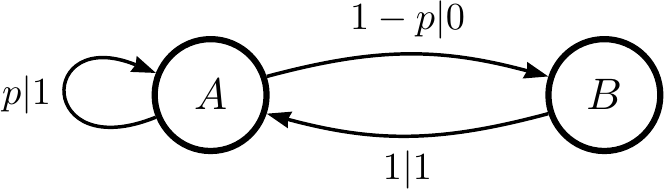}
\caption{The \eM\ presentation of the Golden Mean Process.}
\label{fig:GMeM}
\end{figure}

\subsection{Case: Weakly Asymptotically Synchronizable Presentations}

Let's relax the minimality constraint leaving the \eMs\ for presentations
that are nonminimal unifilar and weakly asymptotically synchronizable.
Again, the states correspond to a partition of the infinite pasts, but since
they are prescient and not minimal unifilar, the partition must be a refinement
of the causal-state partition~\cite{Shal98a}.

The effect of this is benign as seen in both the I-diagram (Fig. \ref{fig:AS})
and the entropy growth plot (Fig.~\ref{fig:ASEntropyGrowth}). In Fig.
\ref{fig:AS}, weakly asymptotically synchronizability ensures that
$H[\AlternateState | \Past] = 0$.
Demanding prescient states determines the form of the I-diagram.
Figure \ref{fig:AS} indicates that $H[\AlternateState] > H[\CausalState]$.
This is a consequence of
$\AlternateStateSet$ being a nontrivial refinement of $\CausalStateSet$.

\begin{figure}
\centering
\ifcolor
    \includegraphics[width= \columnwidth]{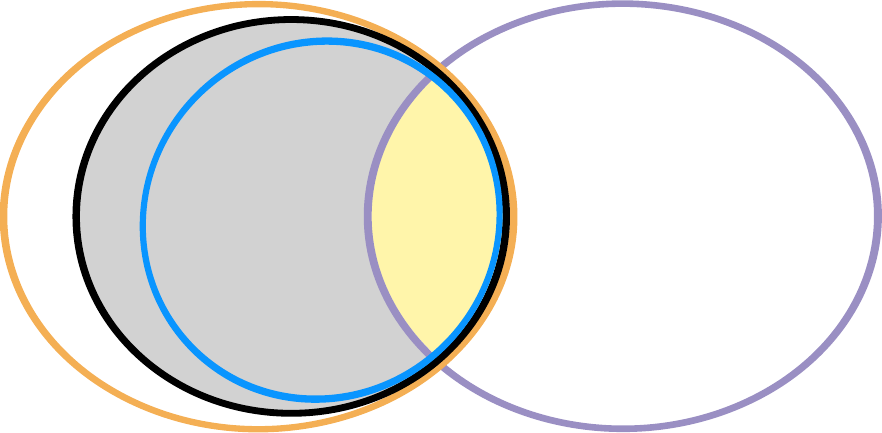}
    \includegraphics[scale=.95]{images/legend_small}
\else
    \includegraphics[width= \columnwidth]{images/AS_gray}
    \includegraphics[scale=.95]{images/legend_small_gray}
\fi
\caption{The information diagram for a presentation that is weakly
  asymptotically synchronizable, but not necessarily minimal unifilar.
  The states still induce a partition on the infinite past. The presentation
  crypticity $\PC(\AlternateState)$ is the difference of the state entropy
  $H[\AlternateState] \geq \Cmu$ and the excess entropy $\EE$.
  }
\label{fig:AS}
\end{figure}

Examining the entropy growth plot, the increased state information is reflected
in the values of the block-state and state-block entropy curves at $L=0$.
Additionally, it is interesting to note what happens to the cryptic order.
We generalized the definition of cryptic order to be that length where
the block-state entropy reaches its asymptote. Since block-state entropy is
nondecreasing, this suggests that it might be forced to reach its asymptote
at a larger value of $L$ than the cryptic order for the \eM\ presentation.
We can see that this is in fact true by expanding the following joint entropy
in two ways. Note that we combine variables from two \emph{different}
presentations and expand
$H[\MeasSymbol_0^L, \CausalState_L, \AlternateState_L]$:
\begin{align}
H[\MeasSymbol_0^L \AlternateState_L]
  & = H[\AlternateState_L | \MeasSymbol_0^L \CausalState_L]
  	+ H[\MeasSymbol_0^L \CausalState_L]
	- H[\CausalState_L | \MeasSymbol_0^L \AlternateState_L] \nonumber \\
  & = H[\AlternateState_L | \MeasSymbol_0^L \CausalState_L]
  	+ H[\MeasSymbol_0^L \CausalState_L] \nonumber \\
  & \geq H[\MeasSymbol_0^L \CausalState_L] ~.
\end{align}

\begin{figure}
\centering
\ifcolor
    \includegraphics[width=\columnwidth]{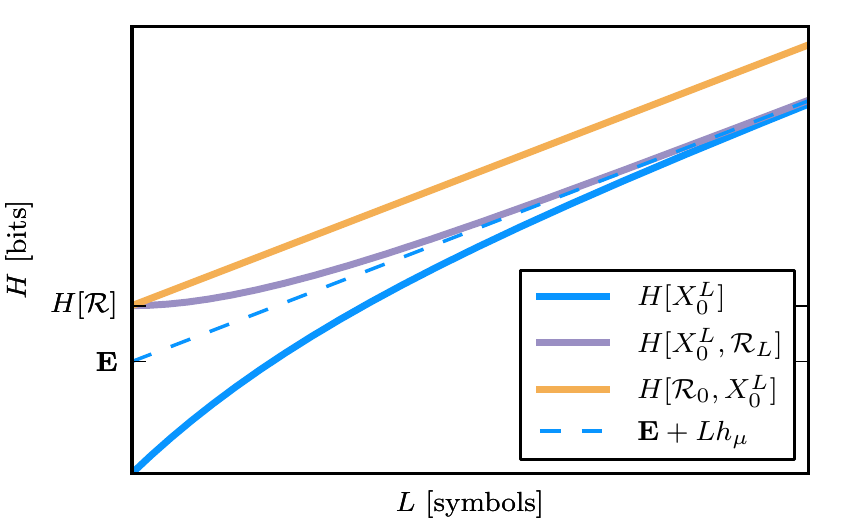}
\else
    \includegraphics[width=\columnwidth]{images/BEGas_gray}
\fi
\caption{Entropy growth for a weakly asymptotically synchronizing
presentation. \BE{L} and \BSE{L} both converge to the same
asymptote.  \SBE{L} is linear.
$H[\AlternateState]$ is larger than $\Cmu$.}
\label{fig:ASEntropyGrowth}
\end{figure}

In the above, we make use of the fact that $\AlternateStateSet$ is a refinement
of $\CausalStateSet$ and that conditional entropies are positive semi-definite.
This shows that the block-state curve for the nonminimal presentation lies
above or on the curve for the \eM\ presentation. Since block and block-state
entropies share an asymptote---$\EE + L \hmu$---the nonminimal unifilar
block-state entropy will reach
its asymptote at a value greater than or equal to the process's cryptic order.
More care will be required in the subsequent cases, as the
relations among entropy growth functions are more complicated.

To illustrate these class characteristics, consider the following three-state
presentation of the Golden Mean Process in Fig.~\ref{fig:GMAS}. The original
causal state partition, $\{A = *1, B = *0\}$, has become refined. (Here, $*$
denotes any allowed history.) We now have $\{A = *11, B = *0, C = *01\}$. It
is straightforward to verify that $H[\AlternateState] = \log_2 (3) = 1.585$ bits.
Excess entropy is unchanged as it is a feature of the process language and not
the presentation. As illustrated in Fig.~\ref{fig:AS}, the crypticity grows
commensurately with $H[\AlternateState]$.

We have shown that for weakly asymptotically synchronizable presentations the
presentation cryptic order generally will be larger than the cryptic order.
It is interesting to note that it is also possible for the presentation cryptic
order to surpass even the Markov order. Our three-state example
(Fig.~\ref{fig:GMAS}) is \cryptic{2} while the Markov order remains $R = 1$
as it also depends only on the process language.

Since the Markov order $R$ bounds the domain of the $\mathcal{I}_0$ integral
and the presentation cryptic order $k$ bounds the domain of the
$\mathcal{J}_0$ integral, the domain of the synchronization information
is bounded by $\max\{R, k\}$.

\begin{figure}
\centering
\includegraphics[scale=.8]{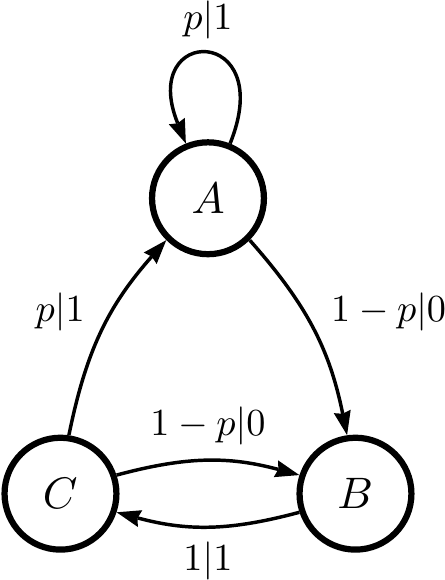}
\caption{A weakly asymptotically synchronizable and nonminimal unifilar
  presentation of the Golden Mean Process: observing a $0$ synchronizes
  the observer to state $B$.
  }
\label{fig:GMAS}
\end{figure}

\subsection{Case: Unifilar Presentations}

Removing the requirement that a presentation be weakly asymptotically
synchronizable, we no longer operate with (recurrent) states that correspond
to a partition of the infinite past, but rather to a covering of the set of
infinite pasts. That is, $\eta(\past)$ can be multivalued, although for
each $\alternatestate \in \AlternateStateSet$, $\eta^{-1} (\alternatestate)$
is a set of pasts that is a subset of some causal state's set of pasts.

Every allowable infinite history induces at least one state in the
presentation---this is the definition of an allowable infinite history.
Additionally, any presentation that is not weakly asymptotically
synchronizable must have a (positive measure) set of histories where
each history induces more than one state.

Consider a unifilar presentation and an infinite history which induces only one
state. Due to unifilarity, we can use this history to construct an infinite set
of histories that are also synchronizing. We conjecture that this set of
histories must have zero measure and, even stronger, that for finite-state
unifilar presentations with a single recurrent component, there are no
synchronizing histories.

This inability to synchronize, a product of the nontrivial covering, is
represented as the information measure $\GI$ in Fig.~\ref{fig:Unif}.
This information is not captured by the causal states.
In fact, it is not even captured by the past (or the future).
It also is not necessary for making predictions with the same power as the \eM.
Like $\PC(\AlternateState)$, $\GI$ is unnecessary for prediction.
However, unlike $\PC(\AlternateState)$, $\GI$ does not capture any structural property
of the process. Instead, it represents degrees of freedom entirely decoupled
(informationally) from the process language and prediction. For this reason,
we called it the \emph{gauge information}.

\begin{figure}
\centering
\ifcolor
    \includegraphics[width=\columnwidth]{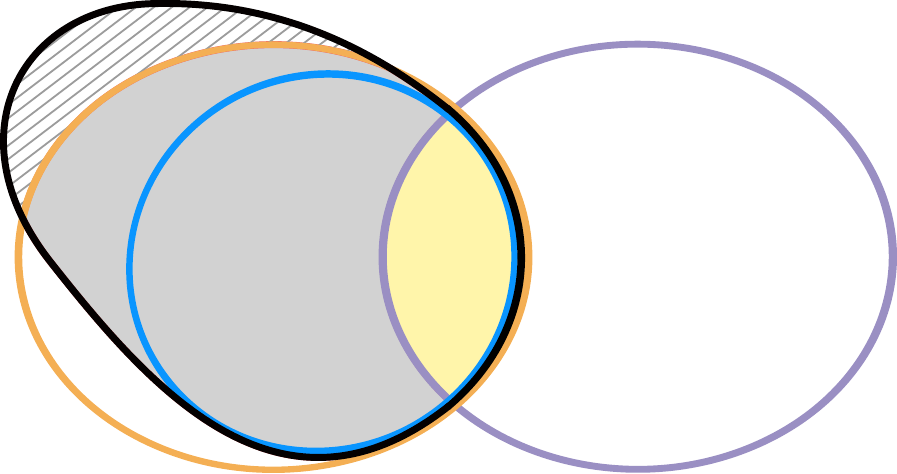}
    \includegraphics[scale=.95]{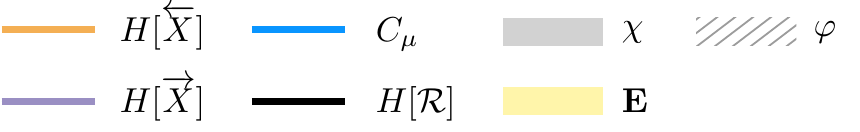}
\else
    \includegraphics[width=\columnwidth]{images/Unif_gray}
    \includegraphics[scale=.95]{images/legend_medium_gray}
\fi
\caption{The information diagram for a presentation that is not weakly
  asymptotically synchronizable, but still unifilar. The states are
  prescient, but no longer induce a partition on the infinite past.
  Furthermore, the states contain information that the past does not contain.
  The presentation crypticity is the difference of the state entropy
  $H[\AlternateState_0] \geq \Cmu$ and the excess entropy
  $\EE = I[\protect\Past_0;\protect\Future_0]$.
  }
\label{fig:Unif}
\end{figure}

The entropy growth plot of Fig.~\ref{fig:UnifEntropyGrowth} has a new and
significant feature representing the change in class. The asymptotes of the
block entropy and block-state entropy become nondegenerate. This has the effect
of making the synchronization information diverge. Although this fact follows
immediately from the definition of weakly asymptotically synchronizable,
it is instructive to see its geometric representation.

\begin{figure}
\centering
\ifcolor
    \includegraphics[width=\columnwidth]{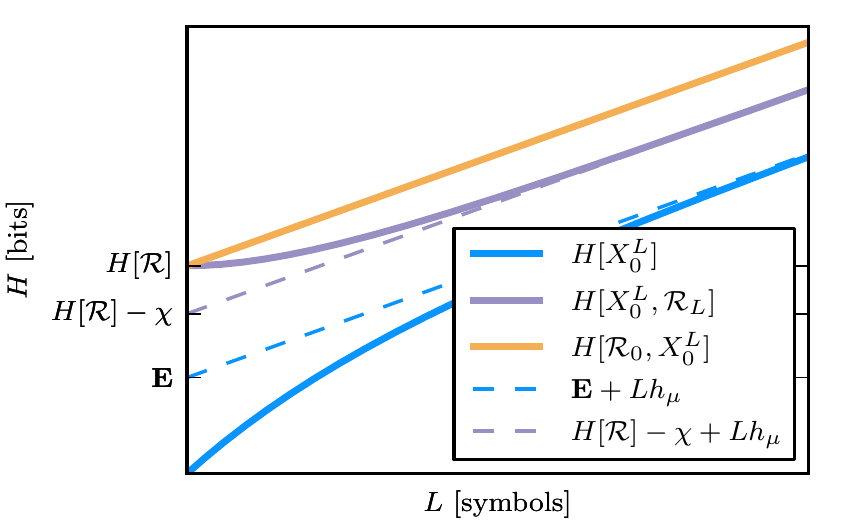}
\else
    \includegraphics[width=\columnwidth]{images/BEGu_gray}
\fi
\caption{Entropy growth for a not weakly asymptotically synchronizable, but
  unifilar presentation.
  \BE{L} and \BSE{L} both converge to different asymptotes.
  \SBE{L} is linear.  $H[\AlternateState]$ is larger than $\Cmu$.
  }
\label{fig:UnifEntropyGrowth}
\end{figure}

Since, from this point forward, synchronization information is always infinite,
we find it necessary to re-express what synchronization information means.
It can be denoted, recall Eq.~(\ref{eq:SyncInfoWhenInfinite}), as the sum of a
finite piece and the limit of a linear (in $L$) piece:
$\SI = \mathcal{I}_0 + \mathcal{J}_0 + \lim_{L \to \infty}{L \GI}$.
This rate of increase of the linear piece is exactly the gauge information.

It is also interesting to note that when this information is obtained---that
is, a constraint is imposed upon the descriptional degrees of
freedom---unifilarity maintains synchronization as more data is produced. In
this sense, acquiring gauge information is a ``one-time'' cost.

The Golden Mean Process presentation in Fig.~\ref{fig:GMU} illustrates all of
the features described above. It is straightforward to see that this
presentation is not weakly asymptotically synchronizing. Any history, finite or
infinite, has exactly two states that it induces. This degeneracy is never
broken, due to unifilarity. Rephrasing, the gauge information value, $\GI = 1$
bit, derives from the fact that each infinite history induces one of two states
with equal likelihood. This relies on the fact that there is no oracular
information contribution---$\OI = 0$ bits since the presentation is
unifilar---to disentangle from the gauge information.

\begin{figure}
\centering
\includegraphics[scale=.8]{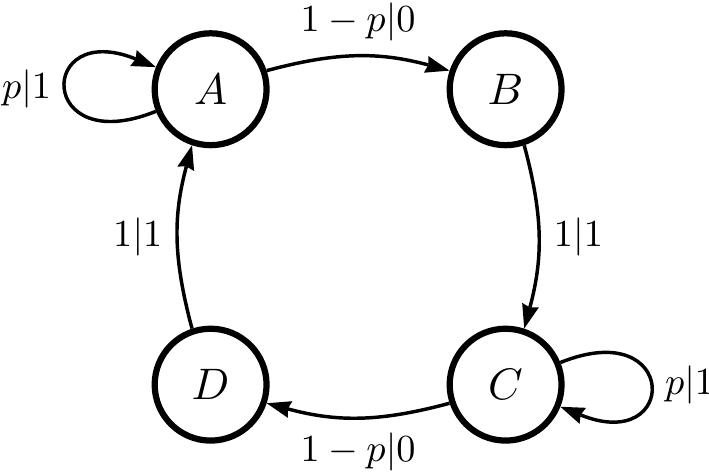}
\caption{A unifilar, but not weakly asymptotically synchronizing,
  presentation of the Golden Mean Process.
  }
\label{fig:GMU}
\end{figure}

\subsection{Case: Nonunifilar Presentations}
\label{sec:Nonunifilar}

Finally, we remove the requirement of unifilarity and examine the much larger,
complementary space of nonunifilar presentations. Only one nonunifilar state
must be present to change the class of the whole presentation. This ease of
breaking unifilarity is why nonunifilar presentations form a much larger class.

Examining the I-diagram in Fig.~\ref{fig:NUnif}, we notice one new feature:
the oracular information $\OI = I[\AlternateState ; \Future | \Past]
= I[\AlternateState ; \Future | \CausalState] \ne 0$. The oracular information
is a curious quantity and so deserves careful interpretation. It is the degree
to which the presentation state reduces uncertainty in the future beyond that
for which the past can account. One might think of this feature as
``super-prescience''. Not only is the information from the past being maximally
utilized for prediction, but some additional information is also injected. We
make several remarks about this.

\begin{figure}
\centering
\ifcolor
    \includegraphics[width=\columnwidth]{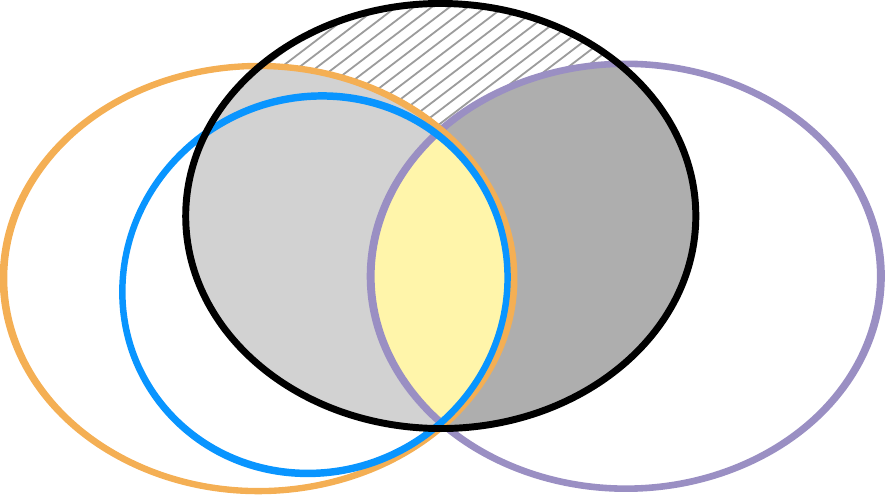}
    \includegraphics[scale=.95]{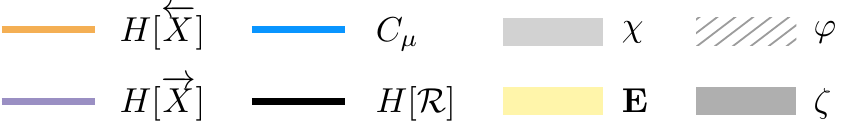}
\else
    \includegraphics[width=\columnwidth]{images/Nunif2_gray}
    \includegraphics[scale=.95]{images/legend_gray}
\fi
\caption{The information diagram for a presentation that is not unifilar.
  The states are super-prescient, do not induce a partition on the past, and
  have information not contained in the past. The presentation crypticity
  is the difference of the state entropy $H[\AlternateState]$ and the excess
  entropy ($\EE$). Note, the state entropy can also be smaller than the
  statistical complexity.
  }
\label{fig:NUnif}
\end{figure}

It is well known that a process's nonunifilar presentations may be smaller than
the corresponding \eM. This fact is sometimes cited~\cite{Crut92c} as providing
evidence that the smaller nonunifilar presentation is the more ``natural''
one \footnote{Similar observations appeared recently; for example, see Ref.
\cite{Loeh09a}. In a sequel we compare this to the earlier results of Refs.
\cite{Crut92c,Uppe97a}.}.
While it is true that the state information $H[\AlternateState]$ can be smaller
than $\Cmu$, and in fact often is, the I-diagram makes plain the fact that
oracular information must be introduced to determine $\AlternateState$ and,
thus, make a super-prescient prediction. For this reason, unless one is
transparent about allowing for oracular information, it is not appropriate
to make a judgment about naturalness of nonunifilar presentations.

Given that we do not have the luxury of access to an oracle, we might like to
know how these presentations perform without this information. The nonoracular
part of $I[\AlternateState ; \Future]$ is simply $\EE$. That is, without the
oracular information, we predict just as we would with any other prescient
presentation. However, the predictions are made using \emph{distributions over
states} rather than individual states. (The former are the mixed states of
Ref. \cite{Crut08b}.) More importantly, as we continue to make predictions,
the state distribution evolves through a series of distributions. These
distributions are in 1-to-1 correspondence with the causal states of the \eM.
And so, for a nonoracular user of a nonunifilar presentation to communicate her
history-induced state to another requires the transmission of $\Cmu$ bits.
The statistical complexity is inescapable as the (nonoracular) information
storage of the process.

When discussing nonweakly asymptotically synchronizable, but unifilar
presentations, we indicated that the gauge information was a ``one-time'' cost.
Now, we ask the same question of the two informations---gauge and
oracular---that are not products of the past. Since we no longer have
unifilarity, state uncertainty is dynamically reintroduced as synchronization
is lost. That is, nonunifilar presentations are allowed to \emph{locally}
resynchronize following the introduction of state uncertainty. The net result
is that over time synchronization is repeatedly lost and reacquired.

\begin{figure*}
\centering
\ifcolor
    \includegraphics[width=\textwidth]{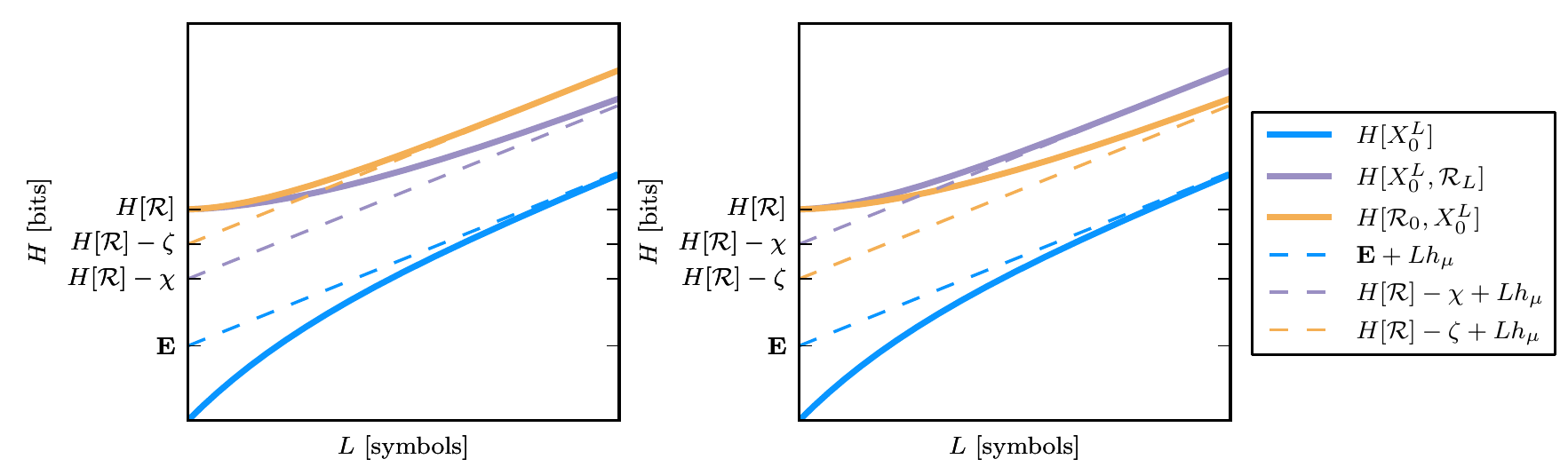}
\else
    \includegraphics[width=\textwidth]{images/BEGnu_nu2_gray}
\fi
\caption{Entropy growth for a nonunifilar presentation.  Left: \BE{L} and
  \BSE{L} both converge to different asymptotes; \SBE{L} is not linear and
  $H[\AlternateState]$ is larger than $\Cmu$. Right: The same as on the left,
  but illustrating that $\PC$ can be less than $\OI$.
  }
\label{fig:NUnifEntropyGrowth}
\end{figure*}

The entropy growth plot of Fig.~\ref{fig:NUnifEntropyGrowth} makes one last
adjustment to acknowledge the change in class. For the first time, the
state-block entropy is nonlinear. It approaches its asymptote from above and,
moreover, the asymptote is independent of the block-state asymptote. The
projection back onto the y-axis mirrors our final and most general I-diagram
of Fig.~\ref{fig:NUnif}. The left panel emphasizes that the crypticity
$\PC(\AlternateState)$ can be less than the oracular information $\OI$, in
general cases.

A nonunifilar presentation of the Golden Mean Process is shown in
Fig.~\ref{fig:GMNU}. All of the above-mentioned quantities are nonzero for
this presentation: For $p = 1/2$, the crypticity $\PC(\AlternateState) = 1/3$
bits, the gauge information $\GI = 1$ bit, and the oracular information
$\OI = 1/3$ bits. The value of the gauge information ($1$ bit) is easy to
understand. It indicates that the nonunifilar presentation is two copies of a
unifilar presentation of the Golden Mean Process sutured together. All of
history space is covered twice and the choice of which component of the cover
is visited is a fair coin flip. The crypticity and oracular information
(crypticity's time-reversed analog) are the same, due to the nonunifilar
presentation respecting the time-reverse symmetry of the Golden Mean Process
\cite{Crut08b}.

\begin{figure}
\centering
\includegraphics[scale=.8]{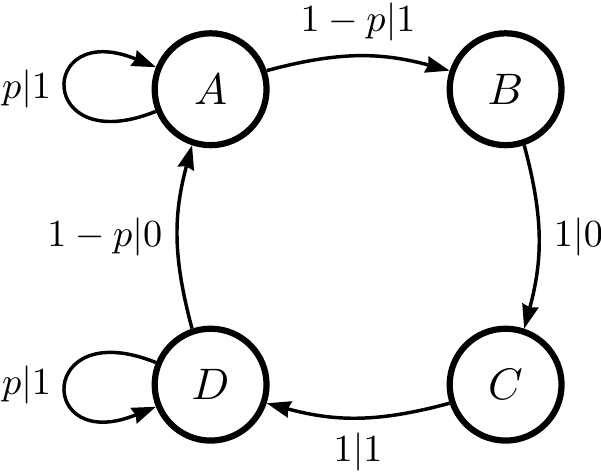}
\caption{A nonunifilar presentation of the Golden Mean Process.}
\label{fig:GMNU}
\end{figure}

\section{Conclusions}

Our development started out discussing synchronization and control. The tools
required to address these---the block-state and state-block entropies---quickly
led to a substantially enlarged view of the space of competing models, the rival
presentations, and a new collection of information measures that reflect their
subtleties and differences.

As milestones along the way, we gave example presentations of the well known
Golden Mean Process that went from the \eM\ to a nonminimal nonsynchronizing
nonunifilar presentation. Table ~\ref{tab:GMP_Presentation_IMeasures} summarizes
the quantitative results. It gives the entropy rate $\hmu$, statistical
complexity $\Cmu$, excess entropy $\EE$, and the crypticity $\PC$ for the
process itself. Immediately following, it compares the analogous measures for
the range of presentations considered. In addition, the gauge information $\GI$
and the oracular information $\OI$, being properties of presentations, are
added. Careful study
of the table shows how the measures track the presentations' structural changes.

A few comments are in order about the tools the development required. The first
were the block-state and state-block entropies, as noted. Analyzing their
word-length convergence properties was the backbone of the approach---one
directly paralleling the previously introduced entropy convergence
hierarchy~\cite{Crut01a}. Another important tool was the I-diagram. While it
is not necessary in establishing final results, it is immensely helpful in
organizing one's thinking and in managing the complications of multivariate
information measures. Methodologically speaking, the principal subject was the
four-variable---past, future, causal state, and presentation state---I-diagram
with its sigma algebra of 15 atoms. Thus, the methodology of the development
turned on just two tools---block entropy convergence and presentation
information measures.

\begin{table*}[tbp]
\setstretch{1.5}
\centering
\textbf{Information Measures for Alternative Presentations}
\\[.8em]
\begin{tabular}{|l||c|c|c|c|c|c|}
\cline{1-5}
Process & $\hmu$ & $\Cmu$ & $\EE$ & $\PC$ &
\multicolumn{2}{c}{ \multirow{2}{*}{} }
\strut\\
\cline{1-5}
~~~Golden Mean & $2/3$ & $\log_2(3) - 2/3$ & $\log_2(3) - 4/3$ & $2/3$ &
\multicolumn{2}{c}{}
\strut\\
\cline{1-5}\noalign{\vskip1.5mm}
\hline
Presentation &
$H[\MeasSymbol|\AlternateState]$ &
$H[\AlternateState]$ &
$I[\AlternateState;\Future]$ &
$\PC(\AlternateState)$ &
~~$\GI$~~ &
~~$\OI$~~ \strut\\
\hline
~~~\EM & $\hmu$ & $\Cmu$ & $\EE$ & $\PC$ & 0 & 0 \\
~~~Synchronizable & $\hmu$ & $\log_2(3)$ & $\EE$ & $4/3$ & 0 & 0 \\
~~~Unifilar & $\hmu$ & $\log_2(3) + 1/3$ & $\EE$ & 5/3 & 1 & 0 \\
~~~Nonunifilar & 1/3 & $\log_2(3) + 1/3$ & $\log_2(3) -1$ & 1/3 & 1 & 1/3 \\
\hline
\end{tabular}
\setstretch{1.1}
\caption{Comparison of information measures for presentations of the
  Golden Mean Process with transition parameter $p = 1/2$.
  }
\label{tab:GMP_Presentation_IMeasures}
\end{table*}

As for the concrete results, we showed that there are two mechanisms operating
in processes that are hard to synchronize to, as measured by the synchronization
information which consists of two corresponding independent contributions. The
first is the
transient information which reflects entropy-rate overestimates that occur at
small block lengths. The second, new here, reflects the state information that
is not retrodictable using the future. With these two contributions laid out,
the general connection between synchronization and transient information,
previously introduced in Ref.~\cite{Crut01a}, became clear. We also pointed out
that the synchronization information for nonsynchronizing presentations can
diverge. This, in turn, called for a generalized definition of synchronization
appropriate to all presentations.

We also generalized the process crypticity, beyond the domain of \eM\ optimal
presentations, to describe the amount of presentation state information that
is shared with the past but not transmitted to the future. A sibling of the
crypticity, we introduced a new information measure for generic
presentations---the oracular information---that is the amount of state
information shared with the future, but not derivable from the past.

Finally, to account for ``components'', either explicitly or implicitly included
in a presentation, that are not justified by the process statistics, we
introduced the gauge information, intentionally drawing a parallel to the
concept of gauge degrees-of-freedom familiar from physics.

One immediate result was that the information measures allowed us to delineate
the hierarchy of a process's presentations. The hierarchy goes from the unique, minimal unifilar, optimal predictor (\eM) to nonminimal unifilar, weakly
asymptotically synchronizing presentations to nonsynchronizing, unifilar
presentations. We showed these are nested classes. Stepping outside to the
nonunifilar presentations leaves one in a markedly larger class for which
all of the information measures play a necessary role.

We trust that the presentation hierarchy makes the singular role of the
\eM\ transparent. First, the \eM's minimality and uniqueness are those of the
corresponding process. This cannot be said for alternative presentations.
Second, there is a wide range of properties that can be efficiently calculated,
when alternative presentations may preclude this. One cannot calculate a
process's stored information ($\Cmu$) or information production rate ($\hmu$)
from, for example, nonunifilar presentations. The latter must be converted,
either directly or indirectly, to the process's \eM\ to calculate them.

Nonetheless, as discussed at some length in Ref.~\cite{Crut92c}, in varying
circumstances---limited material, inference, or compute-time resources; ready
access to sources of ideal randomness; noisy implementation substrates; and
the like---the \eM\ may not be how an observer should model a process.
A minimal nonunifilar presentation, that is necessarily more stochastic
internally than the \eM~\cite{Shal98a}, may be preferred due to it having a
smaller set of states.

Recalling the duality of synchronization and control, we close by noting that
essentially all of the results here apply to the setting in which an agent
attempts to steer a process into desired states. The efficiency with which the
control signals achieve this is reflected in the analogue of block entropy
convergence. The very possibility of control has its counterparts in an
implementation hierarchy that mirrors the presentation hierarchy, but with
controllability instead of synchronizability.

\appendix

\section{Notation Change for Total Predictability}
\label{RURODiff}

The definition for $\mathcal{I}_n$ in Eq.~(\ref{eq:IntegralOperator})---the
total predictability---represents a minor change in notation from
Ref.~\cite{Crut01a}. (We
refer to the latter as RURO, abbreviating its title.) There, the minimum
$L$ was usually $n$ except for $n=2$, when the minimum $L$ value was $L=1$
instead. One reason for the change in definition is that $\mathcal{I}_2$ now
does not depend on any assumption (prior) for symbol entropy rate and depends
\emph{only} on asymptotic properties of the process.

To make this explicit, note that the original definition of total
predictability contained a boundary term:
\begin{align}
 \mathbf{G}_\text{RURO} &= \Delta^2 H(1) + \sum_{L=2} \Delta^2 H(L) ~,
\intertext{where}
 \Delta^2 H(1) &= h_\mu(1) - h_\mu(0) = H(1) - \log_2|\MeasAlphabet| ~.
\end{align}
The logarithm term characterized the entropy rate estimate before any
probabilities are considered. In the modified definition of total
predictability, we drop the boundary term, giving:
\begin{align}
 \mathbf{G} &\equiv \mathcal{I}_2 = \sum_{L=2} \Delta^2 H(L) ~.
\end{align}
The two quantities are related by:
\begin{align}
  \mathbf{G}_\text{RURO}
  &= \mathbf{G} + \Delta^2 H(1) \\
  &= \mathbf{G} + H(1) - \log_2 |\MeasAlphabet| ~.
\end{align}
This affects relationships involving $\mathbf{G}$. Previously, for example,
\begin{align}
  \mathbf{G}_\text{RURO} &= - \mathbf{R} \leq 0 ~,
\end{align}
where $\mathbf{R}$ is the total redundancy. Now,
\begin{align}
  \mathbf{G} &= -\mathbf{R} - \Delta^2 H(1)\\
             &= \log_2|\MeasAlphabet| - H(1) - \mathbf{R}  ~.
\end{align}

\section{State-Block Entropy Rate Estimate}
\label{sbelowerboundhmu}

In this section, we prove Thm.~\ref{sbeconvergence}, which states
that $H[\MeasSymbol_L | \AlternateState_0^{\vphantom{L}}, \MeasSymbol_0^L]$
converges monotonically (nondecreasing) to the entropy rate.

\begin{proof}
First, we show that difference in the $\AltSBE{L}$ forms a nondecreasing
sequence:
\begin{align}
&   H[\MeasSymbol_{L-1} | \AlternateState_0^{\vphantom{L}},
                           \MeasSymbol_0^{L-1}] \\
&\quad=
    H[\MeasSymbol_{L} | \AlternateState_1^{\vphantom{L}},
                        \MeasSymbol_1^{L-1}] \\
&\quad=
    H[\MeasSymbol_{L} | \AlternateState_0^{\vphantom{L}},
                        \MeasSymbol_0^{\vphantom{L}},
                        \AlternateState_1^{\vphantom{L}},
                        \MeasSymbol_1^{L-1}] \\
&\quad\leq
    H[\MeasSymbol_{L} | \AlternateState_0^{\vphantom{L}},
                        \MeasSymbol_0^{\vphantom{L}},
                        \MeasSymbol_1^{L-1}] \\
&\quad=
    H[\MeasSymbol_{L} | \AlternateState_0^{\vphantom{L}},
                        \MeasSymbol_0^{L}] ~.
\end{align}
Next, we show this sequence is bounded and, thus, has a limit.
For all $k \geq 0$, we have:
\begin{align}
&    H[\MeasSymbol_L | \AlternateState_0^{\vphantom{L}}, \MeasSymbol_0^L] \\
&\quad = H[\MeasSymbol_L | \MeasSymbol_{-k}^{k},
                           \AlternateState_0^{\vphantom{L}}, \MeasSymbol_0^L]\\
&\quad \leq H[\MeasSymbol_L | \MeasSymbol_{-k}^{L+k}] \\
&\quad = H[\MeasSymbol_{L+k} | \MeasSymbol_0^{L+k}]~.
\end{align}
Since this holds for all $k$, it also holds in the limit as $k$ tends to
infinity, which is the definition of the entropy rate.  Thus,
$H[\MeasSymbol_L | \AlternateState_0^{\vphantom{L}}, \MeasSymbol_0^L]$
is a nondecreasing sequence and bounded above by $\hmu$.

Finally, we show that this bounded sequence converges to $\hmu$. To do this,
we will show that the difference
\[
H[\MeasSymbol_L^{\vphantom{L}} | \MeasSymbol_0^L] -
H[\MeasSymbol_L^{\vphantom{L}} |
    \AlternateState_0^{\vphantom{L}}, \MeasSymbol_0^L]
 =
  I[\AlternateState_0^{\vphantom{L}}; \MeasSymbol_L^{\vphantom{L}}
    | \MeasSymbol_0^L]
\]
converges to zero.  Then, since the first term (differences in the
block entropies) is known to converge to the entropy rate, the claim will
be proved. We have:
\begin{align}
 H[\AlternateState_0]
& \geq \lim_{L \rightarrow \infty}
  I[\AlternateState_0^{\vphantom{L}}; \MeasSymbol_0^{L},
    \MeasSymbol_L^{\vphantom{L}}] \\
&=
  \lim_{L \rightarrow \infty} \sum_{k=0}^L
  I[\AlternateState_0^{\vphantom{L}}; \MeasSymbol_k^{\vphantom{L}}
    | \MeasSymbol_0^k] ~.
\end{align}
Since the sum is finite, the terms must tend to zero.
\end{proof}

\section{Reducing the Presentation I-Diagram}
\label{zeroatoms}

Proving that the various multivariate information measures vanish
makes use of a few facts about states:
\begin{itemize}
	\item $H[\CausalState | \Past] = 0$.
	\item $H[\Past; \Future | \CausalState] = 0$.
	\item $I[\Past; \Future | \AlternateState] = H[\Future |
	\AlternateState] - H[\Future | \AlternateState, \Past] = 0$.
\end{itemize}
The last one follows from limiting ourselves to states that actually
generate the process. Thus, additional conditioning on the past cannot
influence the future, as the current state alone determines the future.

The following atoms vanish:
\begin{itemize}
	\item $H[\CausalState | \Past , \AlternateState , \Future]$:
	\begin{align*}
		H[\CausalState | \Past , \AlternateState , \Future]
		\leq H[\CausalState | \Past] = 0 ~.
	\end{align*}

	\item $I[\CausalState ; \AlternateState | \Past , \Future]$:
	\begin{align*}
		I[\CausalState ; \AlternateState | \Past , \Future] & =
		H[\CausalState | \Past, \Future] - H[\CausalState | \Past
		, \AlternateState , \Future] \\
		& = H[\CausalState | \Past, \Future] - 0 \\
		& \leq H[\CausalState | \Past] \\
		& = 0 ~.
	\end{align*}

	\item $I[\CausalState ; \AlternateState ; \Future | \Past]$:
	\begin{align*}
		I[\CausalState ; \AlternateState ; \Future | \Past] & =
		I[\CausalState; \AlternateState | \Past] - I[\CausalState
		; \AlternateState | \Past , \Future] \\
		& = I[\CausalState; \AlternateState | \Past] - 0 \\
		& = H[\CausalState | \Past] - H[\CausalState |
		\AlternateState, \Past] \\
		& = 0 - H[\CausalState | \AlternateState, \Past] ~.
	\end{align*}
	Finally, note that
	\begin{align*}
		|H[\CausalState | \AlternateState, \Past]| & \leq |H[\CausalState | \Past]| \\
		& = 0 ~.
	\end{align*}

	\item $I[\CausalState ; \Future | \Past , \AlternateState]$:
	\begin{align*}
		I[\CausalState ; \Future | \Past , \AlternateState]
		& = H[\CausalState | \Past , \AlternateState] -
		H[\CausalState | \Past, \Future, \AlternateState] \\
		& = H[\CausalState | \Past , \AlternateState] - 0 \\
		& \leq H[\CausalState | \Past] \\
		& = 0 ~.
	\end{align*}

	\item $I[\Past ; \Future | \CausalState , \AlternateState]$:
	\begin{align*}
		I[\Past ; \Future | \CausalState , \AlternateState]
		& = H[\Future | \CausalState, \AlternateState] - H[\Future
		| \CausalState, \AlternateState, \Past] \\
		& = H[\Future | \CausalState, \AlternateState] - H[\Future
		| \CausalState, \AlternateState] \\
		& = 0 ~.
	\end{align*}

	\item $I[\Past ; \AlternateState ; \Future | \CausalState]$:
	\begin{align*}
		I[\Past ; \AlternateState ; \Future | \CausalState]
		& = I[\Past; \Future | \CausalState] - I[\Past; \Future |
		\CausalState, \AlternateState] \\
		& = 0 ~.
	\end{align*}

	\item $I[\Past ; \CausalState ; \Future | \AlternateState]$:
	\begin{align*}
		I[\Past ; \CausalState ; \Future | \AlternateState]
		& = I[\Past; \Future | \AlternateState] - I[\Past; \Future
		| \CausalState, \AlternateState] \\
		& = 0 ~.
	\end{align*}
\end{itemize}

The first four vanish due to the causal states being a function of the past.
The last three vanish since any presentation that generates the process
captures all the information shared between past and future.

\section*{Acknowledgments}

This work was partially supported by the DARPA Physical Intelligence Program.
The authors thank Dave Feldman, Nick Travers, and Luke Grecki for helpful
comments on the manuscript.

\bibliography{ref,chaos}

\end{document}


\section{Markov chains}

Include (currently undiscovered) results for Markov chains.

\section{Spin chains}

Include (currently undiscovered) results for spin-blocks.

\section{Miscellaneous}
Not sure if we should include this material, but it is helpful to understand
how we calculate all our quantities. Essentially, it shows how we can compute
every atom in the
$(\AlternateState_0, \MeasSymbol_0^L, \AlternateState_L)$-variable
information diagram using the mixed state presentation.

First, let's state what \texttt{CMPy} can calculate with ease using the mixed
state presentation:
\begin{align}
  &H[\AlternateState_L] = H[\AlternateState_0]\\
  &H[\MeasSymbol_0^L]\\
  &H[\MeasSymbol_0^L, \AlternateState_L]
\end{align}
From these quantities, we can calculate:
\begin{align}
  H[\AlternateState_L | \MeasSymbol_0^L]
&=
  H[\MeasSymbol_0^L, \AlternateState_L] - H[\MeasSymbol_0^L] \\
  H[\MeasSymbol_0^L | \AlternateState_L]
&=
  H[\MeasSymbol_0^L, \AlternateState_L] - H[\AlternateState_L] \\
  I[\MeasSymbol_0^L : \AlternateState_L]
&=
  H[\AlternateState_L] - H[\AlternateState_L | \MeasSymbol_0^L]
\end{align}
In other words, knowing the individual entropies and the joint entropy
is enough to obtain every other atom for a 2-variable information diagram.

Similarly, we can learn everything we want between $\AlternateState_0$ and
$\MeasSymbol_0^L$. By passing in delta function distributions,
we can also compute:
\begin{align}
  H[\MeasSymbol_0^L | \AlternateState_0]
    = \sum_\alternatestate \Pr(\AlternateState_0 = \alternatestate)
                        H[\MeasSymbol_0^L | \AlternateState_0 = \alternatestate]
\end{align}
This gives us:
\begin{align}
  H[\AlternateState_0 | \MeasSymbol_0^L]
&=
  H[\MeasSymbol_0^L | \AlternateState_0] + H[\MeasSymbol_0^L] \\
  I[\AlternateState_0 : \MeasSymbol_0^L]
&=
  H[\AlternateState_0] - H[\AlternateState_0 | \MeasSymbol_0^L]
\end{align}
Finally, we can compute:
\begin{align}
  \Pr(\AlternateState_L= a &| \AlternateState_0 = b)
=
  (T^L)_{ab}\\
 H[\AlternateState_L | \AlternateState_0]
&= \sum_\alternatestate \Pr(\AlternateState_0 = \alternatestate)
   H[\AlternateState_L | \AlternateState_0 = \alternatestate] \\
  H[\AlternateState_0 | \AlternateState_L]
&=
  H[\AlternateState_0, \AlternateState_L] +  H[\AlternateState_L] \\
  I[\AlternateState_0 : \AlternateState_L]
&=
  H[\AlternateState_0] - H[\AlternateState_0 | \AlternateState_L]
\end{align}
Now we want to ask about quantities involving three variables.  Presently,
we know every atom for every 2-variable information diagram involving these
variables.  Do we have enough information to compute all atoms for the
3-variable information diagram?

With delta function distributions, we can compute:
\begin{align}
  H[\MeasSymbol_0^L, \AlternateState_L | \AlternateState_0]
\end{align}
This gives us:
\begin{align}
H[\AlternateState_L | \AlternateState_0, \MeasSymbol_0^L]
&= H[\MeasSymbol_0^L, \AlternateState_L | \AlternateState_0]
    - H[\MeasSymbol_0^L| \AlternateState_0] \\
H[\MeasSymbol_0^L | \AlternateState_0, \AlternateState_L]
&= H[\MeasSymbol_0^L, \AlternateState_L | \AlternateState_0]
    - H[\AlternateState_L | \AlternateState_0] \\
   H[\AlternateState_0 | \MeasSymbol_0^L, \AlternateState_L]
&= H[\AlternateState_0, \MeasSymbol_0^L, \AlternateState_L]
  - H[\MeasSymbol_0^L, \AlternateState_L]
\end{align}
We need 4 more atoms.  And I'm not going to bother typing all this up, but it
turns out knowing these 3 atoms and the original atoms for all pairwise
variables is enough to obtain all atoms for the 3-variable information diagram.
So, we can calculate every atom for the
$(\AlternateState_0, \MeasSymbol_0^L, \AlternateState_L)$ information diagram
using the fast algorithm for calculating block entropies.